\newtheorem{theorem}{Theorem}
\newtheorem{definition}{Definition}
\newtheorem{proposition}[theorem]{Proposition}
\newtheorem{remark}{Remark}
\newtheorem{assumption}{Assumption}
\newtheorem{example}{Example}
\newcommand{\cm}[1]{{{{#1}}}}
\newcommand{\cmt}[1]{{{#1}}}
\newcommand{\ra}[1]{\renewcommand{\arraystretch}{#1}}
\newcommand{\R}{\mathbb{R}}
\newcommand{\mj}{|\mathcal{N}|}
\newcommand{\me}{|\mathcal{E}|}
\newcommand{\mc}{m_\mathrm{c}}
\newcommand{\mep}{|\mathcal{E}^+|}
\newcommand{\mr}{m_r}
\newcommand{\mf}{m_f}
\newcommand{\xj}{x^{\mathrm{j}}}
\newcommand{\xc}{x^{\mathrm{c}}}
\begin{document}

\title{Economic Nonlinear Model Predictive Control of Prosumer District Heating Networks:\\ The Extended Version}

\author{Max Sibeijn, Saeed Ahmed, Mohammad Khosravi, and Tam\'as Keviczky
        % <-this % stops a space

\thanks{
This work was funded by the Local Inclusive Future
Energy (LIFE) City Project (MOOI32019), funded by the
Ministry of Economic Affairs and Climate and by the Ministry
of the Interior and Kingdom Relations of the Netherlands.}%
\thanks{Max Sibeijn, Mohammad Khosravi, and Tam\'as Keviczky are with the Delft Center for Systems and Control,
        Delft University of Technology,  2628 CN Delft, The Netherlands (e-mail: 
        { $\{$m.w.sibeijn, mohammad.khosravi, t.keviczky$\}$@tudelft.nl}).}%
\thanks{Saeed Ahmed is with the Jan C. Willems Center for Systems and Control, ENTEG, Faculty of Science and Engineering,
       University of Groningen, 9747 AG Groningen, The Netherlands (e-mail: 
   {s.ahmed@rug.nl}).}%
% <-this % stops a space

}

% The paper headers
%\markboth{Journal of \LaTeX\ Class Files,~Vol.~14, No.~8, August~2021}%
%{Shell \MakeLowercase{\textit{et al.}}: A Sample Article Using IEEEtran.cls for IEEE Journals}

%\IEEEpubid{0000--0000/00\$00.00~\copyright~2021 IEEE}
% Remember, if you use this you must call \IEEEpubidadjcol in the second
% column for its text to clear the IEEEpubid mark.

\maketitle

\begin{abstract}
In this paper, we propose an economic nonlinear model predictive control (MPC) algorithm for district heating networks (DHNs). The proposed method features prosumers, multiple producers, and storage systems, which are essential components of 4th generation DHNs. These  networks are characterized by their ability to \cm{optimize their operations}, aiming to reduce supply temperatures, accommodate distributed heat sources, and leverage the flexibility provided by thermal inertia and storage—all crucial for achieving a fossil-fuel-free energy supply. 
Developing a smart energy management system to accomplish these goals requires detailed models of highly complex nonlinear systems and computational algorithms able to handle large-scale optimization problems. To address this, we introduce a graph-based optimization-oriented model that efficiently integrates distributed producers, prosumers, storage buffers, and bidirectional pipe flows, such that it can be implemented in a real-time MPC setting. Furthermore, we conduct several numerical experiments to evaluate the performance of the proposed algorithms in closed-loop. \cm{Our findings demonstrate that the MPC methods achieved up to 9\% cost improvement over traditional rule-based controllers while better maintaining system constraints. }
\end{abstract}

\begin{IEEEkeywords}
district heating networks, large-scale systems, economic model predictive control, nonlinear model predictive control 
\end{IEEEkeywords}

\section{Introduction}
\IEEEPARstart{T}{he} energy transition requires a major shift from fossil fuel-based generation to renewable energy sources. In particular, the electrification of heat production is expected to grow enormously; see, e.g., \cite{ISO2024}. Given that the thermal energy sector contributes roughly 50\% to the total \cm{final} energy consumption \cm{within the EU}~\cite{Paardekooper2018}, the need for sustainable heating becomes ever more substantial. Simultaneously, a global increase in the burden on power grids is evident, driven by rising electricity demand (e.g., heat pumps, electric vehicles) and fluctuating supply from renewables (e.g., wind, solar).

In order to address the higher demand and capacity limitations, a possible solution is to expand the existing infrastructure of power grids. However, expanding power grids is typically costly, time-consuming, and held back by a lack of appropriate regulatory frameworks~\cite{Battaglini2012}. An alternative solution is offered by the energy flexibility of district heating networks~(DHNs). DHNs are networks of pipelines that transport heated water by circulating it around \cm{a} district or city. Heat is either taken from or added to the network through heat exchangers located at consumers and producers, respectively. An example of a DHN 
%connected to an energy hub 
is shown in Figure~\ref{fig:DHN_Scheme}.

District heating offers numerous advantages. Economically, it is cost-effective to implement large installations that require less capacity due to simultaneous use. Additionally, DHNs possess significant thermal inertia, allowing them to efficiently utilize free or inexpensive heat from industrial processes or underground geothermal sources. This characteristic provides DHNs with a synergistic element, often referred to as the economy of scope \cite{Werner2013}. Moreover, DHNs can create and exploit flexibility on daily operational timescales by responding to market conditions, which can, in turn, alleviate \cm{strain} on the electricity grid. Nevertheless, DHNs are large-scale systems with many distributed controllable assets. As a result, advanced control strategies are necessary for the effective operational management of these networks. %When implemented intelligently, district heating solutions can benefit all involved parties, including consumers, producers, and owners.

\IEEEpubidadjcol

\begin{figure}[thbp]
    \centering
    \includegraphics[width=.98\linewidth, trim={6.5cm 0.8cm 8.5cm 2.2cm}, clip]{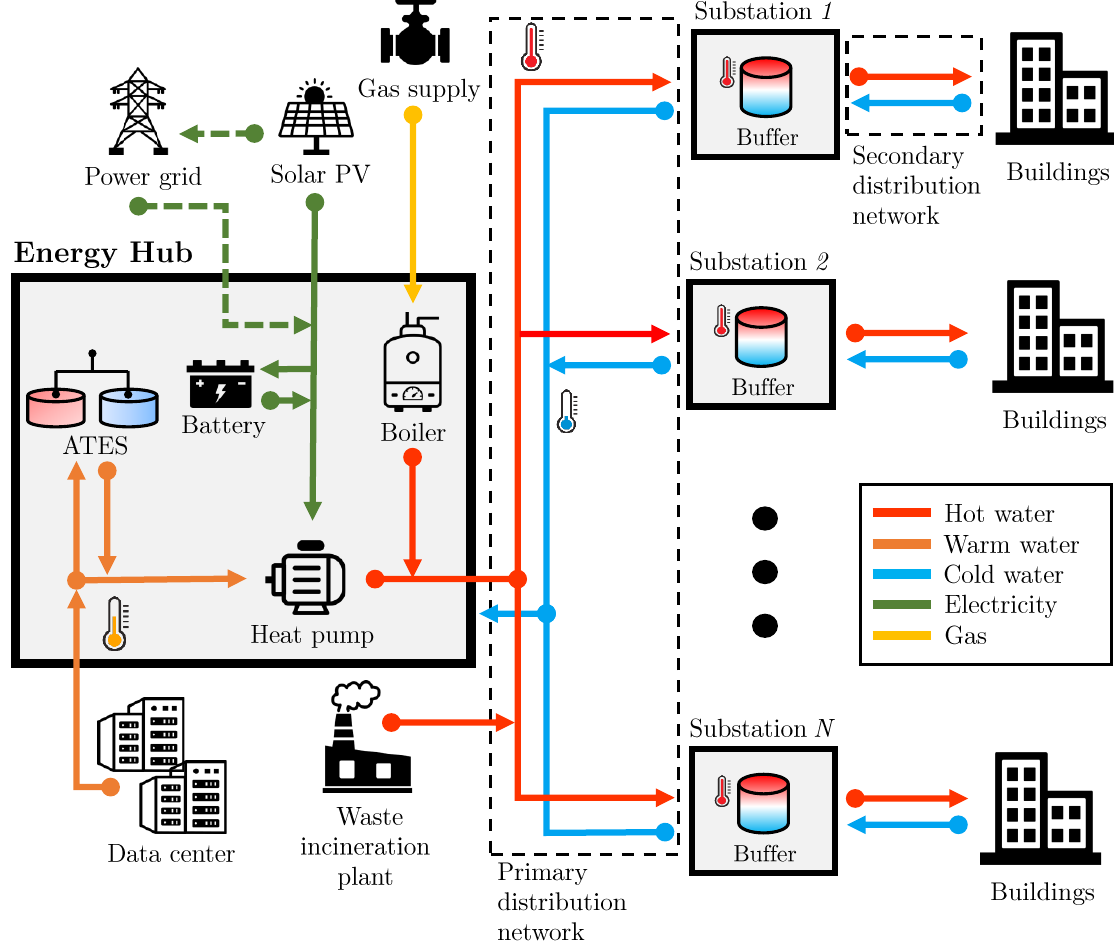}
    \caption{A schematic diagram of a district heating network with multiple substations (consumers) and production points.}
    \label{fig:DHN_Scheme}
\end{figure}

In the past, operational management of DHNs used rule-based controllers that set supply temperatures based on the outside temperatures~\cite{Novitsky2020}, while mass flows were controlled locally at substations using simple tracking controllers. 
For the new generation of district heating networks, referred to as \textit{4th generation district heating}~\cite{Lund2014}, these control methods are insufficient. Several studies~\cite{Novitsky2020,Lund2014,Vandermeulen2018}, emphasize the need for \cm{advanced} control to achieve necessary goals such as supply temperature minimization, creation of flexibility, distribution of heat generation, use of renewable energy sources, and fair sharing of heat. 
%\paragraph{Model predictive control}
Closed-loop optimization-based control methods, such as model predictive control~(MPC), are at the intersection of stabilizing control and operational optimization~\cite{Schwenzer2021}. MPC combines the classical benefits of stabilizing control—such as robustness to demand prediction errors and model mismatch—with the advantages of optimization techniques. Therefore, MPC presents itself as a suitable and effective strategy for managing DHNs~\cite{Vandermeulen2018}.%However, it also has the drawback of requiring the computation of solutions within a shorter timeframe, thereby increasing the computational burden.

%Before stating our contributions, we provide an overview of the literature related to control and optimization in DHNs.

\subsection{State of the Art}
%We provide a brief literature review covering various aspects studied related to DHNs.

%\paragraph{Modeling and stabilizing control}
Classical control methods for DHNs relied on controlling the temperature and the differential pressure at central supply units. For details on these methods, we refer to \cite{Werner2013, Vandermeulen2018}, and the references therein.
Recently, graph-based modeling techniques for DHNs have been a useful tool for developing various stabilizing controllers for both hydraulic and thermal management of DHNs. In this context, the pressure and flow regulation problem was addressed in \cite{Persis2011, Strehle2024}, temperature regulation was studied in \cite{Simonsson2024b, Ahmed2023}, and the stabilizing control of both storage and temperature in a DHN was studied in \cite{Scholten2016, Machado2022b, Machado2022}. Moreover, the works \cite{Machado2022b, Machado2022} introduced a multi-producer graph model integrating the dynamic evolution of storage volumes.
%with integration of storage tanks based on a method that incorporates dynamic evolution of storage volumes. 

%\paragraph{Operational optimization}
Operational optimization of DHNs has been a topic of interest for several decades, with earlier works dating back to the 1990s~\cite{Ben1995, Zhao1998}. These studies highlighted the complexity of managing load distribution while simultaneously minimizing supply temperatures, a challenge stemming from the nonlinear nature of thermal transients and their dependence on flow rates.
Recent studies, such as \cite{Krug2020}, have addressed thermal transients and variable flow rates through the development of open-loop optimization-based controllers using graph-theoretic models based on partial differential equations (PDEs) governing the one-dimensional pipe dynamics. This work also employed a complementary constrained formulation, originating from literature on gas transportation networks~\cite{Hante2019}, to manage switching flow directions. %Furthermore, the effectiveness and computational tractability of their control method was tested through simulation on the AROMA benchmark DHN.
However, closed-loop implementation and important 4th generation DHN features, such as {multiple producers or storage}, were not studied in \cite{Krug2020}.

Recent research has addressed the computational challenges of DHN optimization. Adaptive optimization methods have improved model resolution at critical grid points \cite{Danschel2023}, while model-reduction techniques using Galerkin projections have significantly enhanced numerical solver efficiency \cite{c0}. Graph-theoretic DHN models have demonstrated promising results, achieving good accuracy even with simplified system representations \cite{Simonsson2024}. For single-producer networks, researchers have successfully applied convexification of Kirchhoff loop constraints to eliminate nonconvex momentum equations \cite{Agner2022}. 

Regarding MPC, the nonlinear nature of DHNs often complicates the design of a real-time implementable controller. Hence, several linearized formulations have been proposed, see, e.g., \cite{Sandou2005, Verrilli2016, Farahani2016, Quaggiotto2021}. Recently, a nonlinear MPC and a mixed-integer nonlinear MPC were introduced for a \text{small-scale} network~\cite{Jansen2023, Jansen2024}, allowing the neglect of thermal transients to maintain a tractable formulation. Similarly, \cite{Frison2024} developed an MPC algorithm for prosumer DHNs with storage, leaving out thermal transients and heat losses. While computationally attractive, disregarding thermal transients is not suitable for large-scale DHNs due to the significant time delays~\cite{Ben1995}.

A nonlinear MPC scheme that considers thermal transients, multiple producers, and storage was considered in~\cite{Rose2024}. However, this work employed a stabilizing scheme instead of an economic one, did not rigorously address pressure and Kirchhoff loop constraints, and prosumers were not included. Furthermore, \cite{Labella2023} developed a nonlinear MPC for the AROMA network with fixed-volume layered storage, but did not account for multiple producers or prosumers. Alternative MPC-based approaches have emerged through decomposition-based methods, including distributed MPC to separate computation among agents~\cite{Lefebure2022} and temporal decomposition for MPC solved through Benders decomposition~\cite{Sibeijn2024a}. Learning approaches have also emerged, utilizing physics-informed neural networks for DHN modeling~\cite{deGiuli2024} and incorporating support vector machines to handle comfort constraints in MPC formulations~\cite{Khosravi2024}.

While significant advancements have been made in the operational optimization and control of DHNs, current methods often lack consideration of essential features of 4th generation DHNs, such as multiple producers, prosumers, bidirectional flows, and thermal transients. Additionally, many approaches do not consider an economic objective or do not consider the complexity and scalability necessary to manage large-scale networks in real time. These gaps highlight the need for further research to develop comprehensive solutions that integrate these elements and address the computational challenges inherent in MPC for DHNs.

\subsection{Contributions}
In this work, we focus on economic model predictive control of district heating networks. Hence, we consider the problem of scheduling and management for economic operation. We state our contributions as follows:
\begin{enumerate}
    \item We provide a control- and optimization-oriented, graph-theoretic model for DHNs that includes the following features: \emph{multiple producers, prosumers, storage, bidirectional pipe flow, thermal transients, adaptable model resolution}, and a \emph{generalized Kirchhoff loop convexification} approach to accommodate DHNs with all previously mentioned features.
    We note that, to the best of the authors' knowledge, this is the first work to integrate all these features. In particular, we highlight the inclusion of prosumers, formerly only considered without thermal transients in~\cite{Frison2024}, and the generalization of the Kirchhoff loop convexification approach, previously limited to single producer DHNs~\cite{Agner2022}.
    %\begin{itemize}
        %\item Compatibility (in an optimization sense) with multiple producers, prosumers, storage systems, and bidirectional pipe flow.
        %\item Extension of the convex reformulation of the Kirchhoff loop constraints to accommodate bidirectional, multi-producer or prosumer networks.
    %\end{itemize}
    \item We introduce a novel economic nonlinear MPC algorithm for DHNs and we provide numerical analysis on the convergence properties of the proposed controller.
    \item We conduct a comprehensive study into the numerical performance of the proposed control methods in a closed-loop setting, specifically examining the added value of incorporating storage and multiple producers. %Our findings indicate that the economic performance benefits from integrating these components. 
    Additionally, we provide an in-depth analysis of the computational efficiency of our algorithms.
\end{enumerate}

The paper is structured as follows. In Section~\ref{sec:MAth}, we introduce our general approach used to model DHNs. The following two sections are dedicated to specific modeling techniques for the hydraulic system (Section~\ref{sec:hydraulics}) and the thermal system (Section~\ref{sec:thermaldyn}). In Section~\ref{sec:EMPC}, we formulate the economic MPC problem and discuss practical and theoretical convergence properties of the closed-loop system. Lastly, Section~\ref{sec:results} contains the simulation results.

\section{District Heating Network Model} \label{sec:MAth}

In this section, we detail the fundamental modeling choices and conventions, certain parts of which have been introduced in earlier work by the same authors~\cite{Sibeijn2024}. Firstly, we define the graph-theoretic notions used to model the DHN in Section~\ref{ssec:graphmodel}. Subsequently, we discuss the transient thermal dynamics of pipe flow in Section~\ref{ssec:edge_dynamics}, and conservation constraints that are enforced through nodes in Section~\ref{ssec:node_constraints}. Finally, we introduce an example DHN called the AROMA network in Section~\ref{ssec:AROMA}.

\cm{The modeling framework builds upon a fundamental characteristic of DHNs: their symmetric structure consisting of parallel supply and return pipelines. The supply network delivers hot water from heat sources to consumers, while the mirrored return network carries the cooled water back for reheating. This configuration enables cost-effective installation through shared infrastructure while supporting various pipe arrangements for efficient heat distribution \cite{Neale1987,ElMrabet2024}.
%Therefore, this paper considers all networks to have a symmetric topology.
}

\subsection{Graph Model} \label{ssec:graphmodel}

The DHN consists of hydraulic and thermal components such as pipes, junctions, pumps, valves, heat exchangers, and buffers. We model this network as a strongly connected directed graph $\mathcal{G} = (\mathcal{N},\mathcal{E})$ with a set of nodes $\mathcal{N}$ that represent junctions in the DHN, which are connected by edges $\mathcal{E}\subseteq \mathcal{N}\times\mathcal{N}$, representing pipelines that may be fitted with pumps, valves, or heat exchangers. Note that strong connectivity of $\mathcal{G}$ ensures no mass exits the system. 

For the ease of discussion, let $\mathcal{N}$ be characterized as $\mathcal{N}:=\{1,2,\ldots,\mj\}$. The \textit{adjacency matrix}
%$D :=\big[D_{ij}\big]_{i,j=1}^{\mj,\mj} \in \mathbb{R}^{\mj \times \mj}$ 
\cm{$D \in \{0,1\}^{\mj \times\mj}$}
of $\mathcal{G}$ describes node-to-node connections, i.e., for any $(i,j)\in\mathcal{N}\times\mathcal{N}$, we have  
\begin{equation}
    D_{ij} = \begin{cases}
        1, \quad  \text{if } (i,j) \in \mathcal{E},\\
        0, \quad \text{otherwise}.
    \end{cases}
\end{equation}
Given an enumeration for the edges of the graph modeling our DHN,
the \textit{incidence matrix}
%$E :=\big[E_{ik}\big]_{i,k=1}^{\mj,\me}
\cm{$E \in \{-1, 0, 1\}^{\mj \times \me}$} of $\mathcal{G}$ describes the edge-to-node relationship, i.e., for any  $(i,k)\in\mathcal{N}\times\{1,2,\ldots,\me\}$, we have 
%and consists of elements
\begin{equation}
    E_{ik} = \begin{cases}
        -1, &\quad \text{if the } k^{\text{\tiny{th}}} \text{ edge exits node } i,\\
        1,  &\quad  \text{if the } k^{\text{\tiny{th}}} \text{ edge enters node } i,\\
        0,  &\quad \text{otherwise}.
    \end{cases}
\end{equation}

In our model of the DHN, nodes correspond to volume-less junctions and edges correspond to pipes that may be equipped with heat exchangers, valves, or pumps \cite{Machado2022}. On the other hand, for the sake of simplicity, we consider storage to consist of a fixed volume set of pipe segments, as used in \cite{Labella2023}, rather than the varying volume approach adopted by \cite{Machado2022}. 
%Consequently, the dynamic evolution of pressure and temperature within junctions reduces to a set of algebraic equations, and therefore, the evolution of pressure and temperature are fully described by the edge dynamics. 
Consequently, the dynamic evolution of pressure and temperature within junctions reduces to algebraic equations, with pressure and temperature behavior fully determined by the edge dynamics.

\subsection{Edge Dynamics} \label{ssec:edge_dynamics}
We model the dynamics of an edge representing a pipe, potentially equipped with a heat exchanger, pump, or a valve, through an approximation of the \mbox{1-dimensional} compressible Euler equations \cm{and the thermal energy equation} for cylindrical pipes \cite{Krug2020, Danschel2023}. Hence, the dynamics of edge $e\in\mathcal{E}$ are described through the following PDEs:
% \begin{subequations}
%     \begin{align}
%         \frac{\partial \rho_e}{\partial t} + \frac{\partial (\rho_e v_e)}{\partial x} &= 0, \label{eq:inc_fluid}\\
%         \frac{\partial (\rho_e v_e)}{\partial t}+ \frac{\partial (\rho_e v_e^2)}{\partial x} + \frac{\partial p_e}{\partial x}+ \rho_e g  \hat{z}_e + K_e \frac{ \rho_e}{2d_e}|v_e|v_e &= 0, \label{eq:cons_mom}\\
%         \frac{\partial T_e}{\partial t} + v_e \frac{\partial T_e}{\partial x} + \frac{p_e}{\rho_e c_p}\frac{\partial v_e}{\partial x}
%         - \frac{ K_e}{2c_p d}|v_e|v_e^2+\frac{4U_e}{\rho_e c_p d_e}\left( T_e - T_{a}\right)  &= 0 \label{eq:advec_ener}.
%     \end{align}
% \end{subequations}
\begin{equationarray}{l}
        \partial_t \rho_e + \partial_x (\rho_e v_e) = 0, 
        \label{eq:inc_fluid}\\
        \notag \vspace{-2.5mm} \\
        \partial_t (\rho_e v_e) + \partial_x (\rho_e v_e^2) + \partial_x p_e
        + \rho_e g  \hat{z}_e 
        \notag\\
        \notag\vspace{-4mm}\\
        \qquad\qquad\qquad\qquad\quad\quad
        + K_e \frac{ \rho_e}{2d_e}\,|v_e|v_e = 0, 
        \label{eq:cons_mom}\\
        \notag \vspace{-2.5mm} \\
        \partial_t T_e + v_e\, \partial_x T_e + \frac{p_e}{\rho_e c_p}\partial_x v_e
         - \frac{ K_e}{2c_p d_e}\,|v_e|v_e^2
         \notag\\
        \notag \vspace{-4mm} \\
        \qquad\qquad\qquad\qquad\quad\quad
         +\frac{4U_e}{\rho_e c_p d_e}\left( T_e - T_{a}\right)  = 0 \label{eq:advec_ener},
\end{equationarray}
% \! \, \; \ \,\,\!
which are essentially characterizing
%These equations describe 
the temporal and spatial evolution of three central variables; \textit{temperature} ${T_e}(t,x)$~$[$K$]$,  \textit{flow velocity} ${v_e}(t,x)$~$[\text{m\,s}^{-1}]$, and \textit{pressure} ${p_e}(t,x)$~$[$Pa$]$ of water. The other parameters are the density of water $\rho_e\; [\text{kg\,m}^{-3}]$, gravity $g\; [\text{m\,s}^{-2}]$, slope of pipe $\hat{z}_e \;[\text{-}]$, friction coefficient $K_e\;[\text{-}]$, diameter of pipe $d_e \;[\text{m}]$, heat transfer coefficient $U_e \; [\text{J\,m}^{-2}\text{K}^{-1}]$, specific heat capacity of water $c_p\;[\text{J\,kg}^{-1} \text{K}^{-1}]$, and ambient temperature $T_a\; [\text{K}]$.

Concerning the dynamics of DHNs, it is commonly assumed~\cite{Krug2020, Danschel2023} that the water inside the system is incompressible and has constant density, i.e., we have $\partial_x v_e = 0$, $\partial_t \rho_e = 0$, and $\partial_x \rho_e = 0$, for each $e\in\mathcal{E}$. 
%Naturally, $\partial_x \rho=0$ follows from the application of this  to \eqref{eq:inc_fluid}. 
Moreover, heat \cm{generated through} friction is negligible in practice~\cite{Danschel2023}, particularly compared to other terms in \eqref{eq:advec_ener}. 
% Therefore, we have $\ddfrac{K_e}{2c_p d_e}|v_e|v_e^2 = 0$, for each $e\in\mathcal{E}$. 
Therefore, we \cm{omit} the term
$
    \frac{K_e}{2c_p d_e}|v_e|v_e^2
$ from \eqref{eq:advec_ener}.
% for each $e\in\mathcal{E}$.
\cm{Moreover, since pipelines are typically laid underground at constant depth \cite{ElMrabet2024}, we assume no elevation differences throughout the network, i.e., $\hat{z}_e = 0, \,\, \forall \, e \in \mathcal{E}$.}
Additionally, considering the significant separation in time scales between thermal and hydraulic dynamics, and since the frictional term in \eqref{eq:cons_mom} dominates the inertial term \cite{c0}, one can neglect dynamics on the flow rate, i.e., ${\partial_t v_e} = 0$, for any $e\in\mathcal{E}$. 
Thus, for the dynamics of edge $e\in\mathcal{E}$, we have the following %simplified stationary incompressible 1-dimensional Euler 
equations
% \begin{subequations}
%     \begin{align}
%         &\partial_x p_e+  \rho g  \hat{z}_e + K_e \frac{ \rho}{2d_e}|v_e|v_e = 0, \label{eq:cons_mom2}\\
%         &\partial_t T_e + v_e \partial_x T_e  +\frac{4U_e}{\rho c_p d_e}\left( T_e - T_{a}\right)  = 0 \label{eq:advec_ener2}.
%     \end{align}
% \end{subequations}
\begin{equationarray}{l}
        \partial_x p_e+ K_e \frac{ \rho}{2d_e}|v_e|v_e = 0, \label{eq:cons_mom2}\\
        \notag \vspace{-2.5mm} \\
        \partial_t T_e + v_e \partial_x T_e  +\frac{4U_e}{\rho c_p d_e}\left( T_e - T_{a}\right)  = 0 \label{eq:advec_ener2}.
\end{equationarray}

\subsection{Nodal Constraints} \label{ssec:node_constraints}

The nodal constraints follow from conservation laws, i.e., the incompressibility condition implies that mass must be conserved over any node. 
More precisely, for any node $n \in \mathcal{N}$, we define the edge sets 
% \begin{align*}
%     \mathcal{E}_{\rightarrow n} &= \{e \in \mathcal{E} : e \text{ enters } n\} \text{  and}\\ \mathcal{E}_{n \rightarrow} &= \{e \in \mathcal{E} : e \text{ exits } n\}
% \end{align*} 
% \begin{align*}
%     \mathcal{E}_{\rightarrow n} &= \{e \in \mathcal{E} : e \text{ enters } n\},
% \end{align*} 
% and
% \begin{align*}
% \mathcal{E}_{n \rightarrow} &= \{e \in \mathcal{E} : e \text{ exits } n\},
% \end{align*} 
$\mathcal{E}_{\rightarrow n} = \{e \in \mathcal{E} : e \text{ enters } n\}$ and $\mathcal{E}_{n \rightarrow} = \{e \in \mathcal{E} : e \text{ exits } n\}$ as the set of entering and exiting edges, respectively, and subsequently, mass conservation is characterized through the following flow balance equation 
\begin{equation} \label{eq:mass_cons}
    \sum_{e\in \mathcal{E}_{\rightarrow n}}q_e(t)  = \sum_{e\in \mathcal{E}_{n \rightarrow}} q_e(t), 
\end{equation}
where $q_e(t) = \Phi_e v_e(t)$ is the flow rate in pipe $e$ with $\Phi_e$ the cross-section of pipe. Additionally, energy balance should also be considered for any node, which can be described as a mixing rule determining the relationship between the exit temperature of a node as a function of temperatures of entering flows. 
Accordingly, one can employ a mixing rule that takes a flow-weighted average of the incoming temperatures, as in \cite{Machado2022} and \cite{Krug2020}, to obtain the temperature of a node.
More precisely, for the temperature of any node $n\in\mathcal{N}$, we have
\begin{equation} \label{eq:mixing}
    T_n(t) =  \frac{\sum_{e\in \mathcal{E}_{\rightarrow n}}q_e(t)T_e(t)}{\sum_{e\in \mathcal{E}_{\rightarrow n}}q_e(t)}.
\end{equation}
Also, the temperature of any edge exiting a node is expected to be the same as the node temperature, i.e., we have
\begin{equation} \label{eq:exit_temp}
    T_e(t) = T_n(t), \qquad \forall\, n\in\mathcal{N}, \forall\, e \in \mathcal{E}_{n \rightarrow}.
\end{equation}
The equations presented so far give rise to a graph theoretical model for DHNs. Thus, the system is described through a set of partial differential algebraic equations (PDAEs),
%\begin{align} \tag{P$_0$} \label{eq:P0}
namely \eqref{eq:cons_mom2}~-~\eqref{eq:exit_temp},
%\end{align}
which is used as the base formulation for our future derivation in this paper.
% and we use this formulation as a base for further derivation.
It is worth noting the DHN model in the current form is not applicable 
%Nevertheless, this form is unsuitable 
for optimization purposes mainly due to the complex infinite-dimensional nature of the mentioned PDAEs. 
% Therefore, we dedicate the next sections to finding simplified yet realistic optimization-based control-oriented models.
Accordingly, in the following sections, we will develop more tractable and yet realistic models, suitable for optimization-based control strategies.

\subsection{AROMA: A Benchmark for District Heating Networks} \label{ssec:AROMA}
%{\color{red}[The current considered place in the paper for introducing AROMA somehow wanders away the discussion. Maybe, we should move it here!] }

Before proceeding, we introduce the AROMA district heating network, which is widely used in DHN literature as a benchmark example to evaluate the performance of numerical algorithms, see, e.g., \cite{Krug2020, Labella2023}. Therefore, we will employ this network to demonstrate the methodology developed in this paper. The network, depicted in Figure~\ref{fig:AROMA_DIAGRAM}, features multiple consumers, producers and a storage unit. Originally, the AROMA network consisted of only a single producer without any storage included~\cite{Krug2020}. Recently~\cite{Labella2023}, storage was featured in the network by augmenting the AROMA network model. Considering that we are focused on the general case of DHN, namely with multiple producers/prosumers,  for the demonstration of the proposed modeling scheme and the numerical validation of the developed methodology in this paper, we additionally augment the AROMA DHN to accommodate the mentioned extensions and features. Nevertheless, one should note that our methodologies are applicable beyond the AROMA network. %, as long as the practical assumptions discussed in this paper hold.

\begin{figure}[thbp]
    \centering
    \includegraphics[width = \linewidth,trim={7.2cm 8cm 7.2cm 8.2cm}, clip]{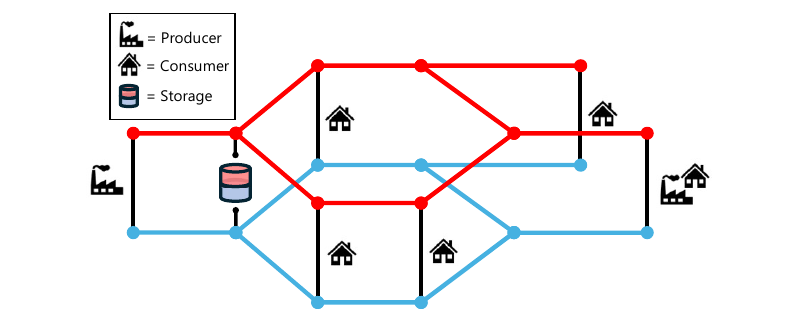}
    \caption{The AROMA district heating network with multiple consumers, a prosumer, a storage buffer, and a loop.}
    \label{fig:AROMA_DIAGRAM}
\end{figure}

\section{DHN Model: A Tractable Reformulation for the Hydraulics} \label{sec:hydraulics}
%In this section, we aim to improve the tractability for the DHN model introduced in Section~\ref{sec:MAth}, particularly the hydraulic dynamics described through equations 
In this section, we aim to improve the tractability of the DHN model introduced in Section~\ref{sec:MAth}, with a particular focus on the hydraulic dynamics described by the equations \eqref{eq:cons_mom2} and \eqref{eq:mass_cons}. To this end, we first introduce a new graph to accommodate bidirectional flows. Secondly, we describe the method used to compute a set of independent cycles that fully describe the mass flowing through the network. Thirdly, we address the treatment of the momentum equation and introduce a convex reformulation for these constraints.

\subsection{Fixing the Flow Direction}
Considering the graph abstraction of a district heating network as in Figure~\ref{fig:AROMA_DIAGRAM}, each pipeline is represented by a string of edges connected in series. Thus, in the coarsest representation of our system, we have exactly one edge for each pipeline until it reaches a junction. %Hence, we have $|\mathcal{E}|$ pipelines, where with respect to each of them, we define flow variable $q_e \in \mathcal{Q}_e \coloneqq \{q_e : \underline{q}_{e} \leq q_e \leq \overline{q}_e \}$. 
Hence, we have $\me$ pipelines, where with respect to each $e\in\mathcal{E}$, we define flow variables $q_e$ that belong to the interval $\mathcal{Q}_e \coloneqq \{q_e : \underline{q}_{e} \leq q_e \leq \overline{q}_e\}$, \cm{where $\overline{q}_e > 0$}.

% % If we allow $\underline{q}_e < 0$ and $\overline{q}_e > 0$, the direction of the flow may switch during the operational phase. 
% When $\underline{q}_e < 0$ and $\overline{q}_e > 0$, the direction of the flow can switch during the operational phase. 
% While seemingly a relatively minor detail, this has significant implications for the modeling procedure and complicates the design of a tractable controller. Therefore, for each edge where directional switching is allowed, we decompose the corresponding flow variable as 
% \begin{align} \label{eq:complementary_con}
%     q_e = q_e^+ - q_e^- \; \; \text{with} \; \;q_e^+,q_e^- \geq 0 \; \;\text{and}\; \; q_e^+  q_e^- = 0.
% \end{align}
% If we allow $\underline{q}_e < 0$ and $\overline{q}_e > 0$, the direction of the flow may switch during the operational phase. 
While seemingly a relatively minor detail, allowing  the direction of the flow to switch during the operational phase has significant implications for the modeling procedure and complicates the design of a tractable controller. Therefore, for the edges where directional switching is allowed, i.e., for $e\in\mathcal{E}$ with $\underline{q}_e < 0$ and $\overline{q}_e > 0$, we decompose the corresponding flow variable as 
\begin{align} \label{eq:complementary_con}
    q_e = q_e^+ - q_e^- \; \; \text{with} \; \;q_e^+,q_e^- \geq 0 \; \;\text{and}\; \; q_e^+  q_e^- = 0.
\end{align}
In terms of the graph, we obtain a new graph $\mathcal{G}^+~=~(\mathcal{N}, \mathcal{E}^+)$ with additional set of edges $\mathcal{E}^+ = \mathcal{E} \cup \Delta_{\mathcal{E}}$, where \cm{$\Delta_{\mathcal{E}} \coloneqq \{ (v, u) \notin \mathcal{E} \mid e = (u, v) \in \mathcal{E} \text{ and } \underline{q}_{e} < 0 \}$}. To illustrate, we depict $\mathcal{G}^+$ for the AROMA network in Figure~\ref{fig:graph_aroma}. Here, $\mathcal{G}^+$ is directed, and double-sided arrows indicate pipelines that allow bidirectional flow. The direction matching the original orientation in $\mathcal{G}$ will have the flow $q_e^+$ associated with it, while the newly added reverse edge has $q_e^-$ associated with it.

\begin{figure}[thbp]
    \centering
    \includegraphics[width = .72\linewidth,trim={7cm 7.65cm 7cm 7.65cm}, clip]{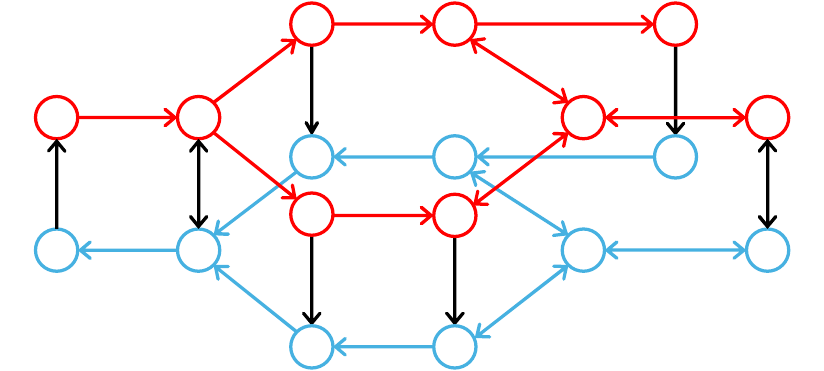}
    \caption{A directed graph abstraction of the AROMA district heating network with fixed flow directions. Double-sided arrows indicate two opposite facing edges.}
    \label{fig:graph_aroma}
\end{figure}

\begin{remark} \normalfont
    \cm{In typical DHNs, bidirectional flow is neither necessary nor practical for all pipeline segments. \cm{However, bidirectional flow capability is essential for enabling prosumer operation in the network.} The determination of flow directionality is made during the design phase through systematic analysis of network architecture.
    %, pump locations, and valve placements. 
    Therefore, not all edges in Figure \ref{fig:graph_aroma} are bidirectional. 
    \hfill$\diamond$}
\end{remark}
\subsection{Independent Flows}

We denote the vector of edge flow rates as $q = [q_e]_{e \in \mathcal{E}^+}$. The flow on an edge depends linearly on the flow on all other edges in the network due to conservation of mass. As a result, we can reduce the number of free flow variables that we need to optimize for. To this end, we introduce the \textit{reduced loop matrix} $F_r$ which maps the reduced flow vector $q_r \in \R_{+}^{\mr}$ to $q$ through
\begin{equation} \label{eq:id_flows}
    q = F_r^\top q_r,
\end{equation}
with $\mr < \mep$.
\cm{Intuitively, the vector $q_r$ represents the flow which circulates the network, meaning that it passes through a supply section and its mirrored return section reaching back to its starting point. In the process, these flows pass through producers, consumers, or storages.}
Additionally, we introduce the \textit{fundamental loop matrix} $F$ which maps all fundamental flows $q_f \in \R^{\mf}$ to $q $, i.e.,
\begin{equation} \label{eq:id_flows2}
    q = F^\top q_f,
\end{equation}
with $\mf \leq \mr$. \cm{The vector $q_f$ depends linearly on the elements of $q_r$}.
The matrices $F_r$ and $F$ consist of columns that describe directed cycles within $\mathcal{G}^+$. Nonetheless, $F$ is full column rank while $F_r$ may not be. Similarly, $q_r$ is a vector in the positive orthant, whereas the entries of $q_f$ are not required to be nonnegative. 
%The introduced separation is mainly due to that the nonnegativity of $q_r$ in \eqref{eq:id_flows} leads to significant improvement of the numerical results for the algorithms compared to using \eqref{eq:id_flows2}. One should note that we can still use \eqref{eq:id_flows2} to reduce the number of loop constraints in Section \ref{ssec:khoff_loops}. 
The reason for introducing both $F_r$ and $F$ is that the nonnegativity of $q_r$ in \eqref{eq:id_flows} significantly improves the numerical results compared to using \eqref{eq:id_flows2},  while $F$ is primarily employed for theoretical purposes, as further detailed in Section~\ref{ssec:khoff_loops}.
%One should note that we can still use \eqref{eq:id_flows2} to reduce the number of loop constraints in Section \ref{ssec:khoff_loops}.

In \cite{Persis2011}, a method is presented to compute the fundamental loop matrix $F$  by setting the free flow variables as the flows through the chords of the spanning tree of $\mathcal{G}$. Consequently, a fundamental loop is defined as the loop that is formed whenever a chord is re-connected to the spanning tree. Then, the fundamental loop matrix $F$ has elements \cm{$F_{ij} \in \{-1,0,1\}$, for all $i$ and $j$,} depending on the orientation of the chord and whether an edge is part of a fundamental loop.

%In this paper, we develop a slightly different approach to compute $F$ preserving nonnegativity of all elements in $F$. 
%Specifically, let us begin with the following definition.
To preserve nonnegativity of all elements in $F$, we develop a slightly different approach to compute $F$.
To this end, we need the notion of a \emph{directed cycle} \cite{Douglas2001}.
%as introduced below.
\begin{definition}[Directed cycle] \normalfont
    A directed cycle $\mathcal{C}$ in $\mathcal{G}$ is a sequence of nodes and edges as $\left(n_0, e_1, n_1, e_2, \dots, e_k, n_k \right)$ such that 
    \begin{enumerate}
        \item $n_0,n_1,\dots,n_k$ are different nodes,
        \item $e_i = (n_{i-1}, n_i)$ for $i = 1,\dots, k$,
        \item $n_0 = n_k$,
    \end{enumerate}
    \cm{where two sequences are an equivalent cycle if one can be obtained from the other by a cyclic permutation.}
\end{definition}
Let $\mathcal{S}(\mathcal{G}) \coloneqq \{\mathcal{C}_i(\mathcal{G})\}$ be the \cm{set} of all directed cycles within $\mathcal{G}$. The matrix $F_r$ and $F$ can be computed through the following procedure:
\begin{enumerate}
    \item Compute all directed cycles $\mathcal{S}(\mathcal{G}^+)$ in $\mathcal{G}^+$.
    \item Remove any cycles that consist of at most two nodes.
    \item Remove any asymmetric cycles, i.e., cycles for which their path is not mirrored between the supply and return network. 
   We call this reduced set $\mathcal{S}_r (\mathcal{G}^+) $.
    \item Define reduced loop matrix $F_r$ with elements
    \begin{equation*}
        F_{r,ij} = \begin{cases}
            1, \quad \text{if edge } j \in \mathcal{C}_i  \text{ and cycle } \mathcal{C}_i \in \mathcal{S}_r (\mathcal{G}^+),\\
            0, \quad \text{otherwise},
        \end{cases}
    \end{equation*}
    where $F_r$ has $\mr = |\mathcal{S}_r (\mathcal{G}^+)|$ rows and $\mep$ columns.
    \item Compute $F$ by performing a pivoted QR factorization on $F_r^\top$ as  
    \begin{equation}
        F_r^\top P = QR,
    \end{equation}
    where $R \in \R^{\mep \times\, \mr}$ is an upper triangular matrix, $Q\in\R^{\mep \times \mep}$ is an orthonormal matrix, $P\in\R^{\mr \times \mr}$ is a permutation matrix, and the columns of $F_r^\top P$ are projected onto an orthonormal basis spanned by the columns of $Q$. If rank($F_r$) = $\mf < \mr$, all elements $R_{ii}$ with $i > \mf$ are zero, indicating that corresponding columns of $F_r^\top P$ lie in the subspace of all prior columns. Hence, we obtain the fundamental loop matrix by taking only the first $\mf$ columns of $F_r^\top P$ as follows:
    \begin{equation}
        F = \left( F_r^\top P \begin{bmatrix}
            I_{\mf} \\ \mathbf{0}
        \end{bmatrix} \right)^\top.
    \end{equation}
   The set of remaining cycles is denoted as $\mathcal{S}_f(\mathcal{G}^+)$.
\end{enumerate}
\cm{Matrix $F$, constructed via QR factorization with column pivoting on $F_r^\top$, inherits nonnegativity from $F_r$ and possesses full column rank, ensuring mass conservation within the network.}

\begin{remark}\normalfont
    %We note that, in principle, the rows of computed fundamental loop matrix spans the same basis as the row space for the matrices introduced in \cite{Persis2011} and \cite{Machado2022}, with the only exception that here the asymmetric cycles are excluded without loss of generality.
    We note that, in principle, the rows of the computed fundamental loop matrix span the same basis as the row space of the matrices introduced in \cite{Persis2011} and \cite{Machado2022}. The only difference is that, in this context, we exclude asymmetric cycles without loss of generality.
    \hfill$\diamond$
\end{remark}

\subsection{Treatment of the Momentum Equation}
%In practice, the control of the hydraulic and thermal components within the DHN is executed over different time intervals, due to the significant time scale separation between their dynamics. 
In practice, due to the significant time scale separation between the hydraulic and thermal dynamics, the control of corresponding components within the DHN is executed over different time intervals.
Thus, the dynamic behaviour of flows and pressure throughout the network is less of concern here.
%The dynamic behavior of flows and pressure throughout the network is less of a focus in this context.
Nonetheless, one needs to certify feasibility for the hydraulic operation. More precisely, we want to identify a set $\mathcal{Q}$, such that for all $q \in \mathcal{Q}$, pressure remains within the limits at all nodes of the network, and Kirchhoff's second law, stating that the sum of pressure differences along each loop in the network equates to zero, is satisfied. 
%More precisely, we want to obtain set $\mathcal{Q}$, where for each $q \in \mathcal{Q}$, the pressure at any node within the network remains within prescribed limits, and Kirchhoff's second law, stating that the sum of pressure differences along each loop in the network equates to zero, can be satisfied. 

One can approximate the conservation of momentum equation \eqref{eq:cons_mom2} by substituting $v_e$ with $v_e = \frac{4q_e}{\pi d_e^2}$ and discretizing $\partial_x p = \frac{\Delta p_e}{L_e}$, \cm{where $L_e$ denotes the pipe length}, for each $e\in\mathcal{E}^+$. Accordingly, one obtains the equation describing pressure drop over pipe segment $e$ caused by friction as
\begin{equation} \label{eq:pipe_pdrop}
    \Delta p_e = 8 \rho L_e \frac{K_e}{\pi^2 d_e^5} |q_e|q_e = R_{\mu,e} q_e^2, \qquad\forall\, e \in \mathcal{E}^+,
\end{equation}
where $R_{\mu,e}$ is a combined constant term representing the frictional resistance in the pipe. Additionally, the sign dependency for $q_e$ is removed according to \eqref{eq:complementary_con}, i.e.,  the flow direction on each edge is fixed.

We define the sets $\mathcal{P} \subset \mathcal{E}^+$ and $\mathcal{V} \subset \mathcal{E}^+$ that contain the edges with a pump and a valve, respectively. Then, for each edge $e \in \mathcal{P} $, the pressure change over $e$ is described by
\begin{equation} \label{eq:pump_pdrop}
\begin{aligned}
    \Delta p_e &= R_{\mu,e}q_e^2 - h_e(r_e),
\end{aligned}
\end{equation}
with $h_e(r_e) = c_e r_e$ being the \cm{pressure difference} induced by the pump, $c_e$ being the maximum pumping power capacity, and  $r_e \in [0, 1]$ being the \cm{normalized pump speed}. Additionally, for each edge $e \in \mathcal{V}$,
the pressure drop over $e$ is described by
\begin{equation} \label{eq:valve_pdrop}
    \Delta p_e = R_{\mu,e}q_e^2 + R_{\nu,e}(\nu_e) q_e^2,
\end{equation}
where $0 \leq R_{\nu,e}(\nu_e) < \infty$ is a time-varying control variable and $\nu_e \in [0,\infty)$ is the position of the valve such that $R_{\nu,e}(0) = 0$ corresponds to the full opening of the valve and $\lim_{\nu_e \rightarrow \infty}R_{\nu,e}(\nu_e) \rightarrow \infty$ corresponds to the full closing of the valve. Here, we are not considering specific valve types and characteristics. It is sufficient to only assume that any increase in $\nu_e$ will increase the resistance over the valve \cm{such that $R_{\nu,e}$ is strictly monotone.} Nonetheless, for the ease of discussion, one can consider $R_{\nu,e}$ to be linear in $\nu_e \in [0, \infty)$, i.e., we have $R_{\nu,e}(\nu_e) = R_{\nu,e}\nu_e$, where $R_{\nu,e}$ is a positive scalar indicating the resistance coefficient of the valve.

\subsection{Convex Reformulation of the Kirchhoff Loop Constraints} \label{ssec:khoff_loops}
The constraints imposed by the equations \eqref{eq:pipe_pdrop}-\eqref{eq:valve_pdrop} are nonconvex with respect to the flow rate due to the quadratic friction terms, and therefore, they need to be reformulated. 
% The equations \eqref{eq:pipe_pdrop}-\eqref{eq:valve_pdrop}  need to be reformulated since the constraints they impose are nonconvex with respect to the flow rate due to the quadratic friction terms. 
%To this end, we first introduce several preliminaries before stating the main result of this section in Proposition~\ref{prop}. 
More precisely, defining $\mathcal{S}^\dagger(\mathcal{G}^+)$ as the set of \textit{all} cycles in $\mathcal{G}^+$ %not only directed cycles.
independent of the direction of the edges, 
Kirchhoff's second law states that 
\begin{equation} \label{eq:Khoff}
    \sum_{e \in \mathcal{C}_{i}} \Delta p_j = 0, \quad \quad \forall \mathcal{C}_i \in \mathcal{S}^\dagger(\mathcal{G}^+),
\end{equation}
i.e., for each cycle, we need to satisfy a quadratic \cm{equality} constraint, which is essentially nonconvex. 
% Let $\mathcal{S}^\dagger(\mathcal{G}^+)$ be the set of \textit{all} cycles in $\mathcal{G}^+$, %not only directed cycles.
% independent of the direction of the edges.
% The Kirchhoff's second law states that 
% \begin{equation} \label{eq:Khoff}
%     \sum_{e \in \mathcal{C}_{i}} \Delta p_j = 0, \quad \quad \forall \mathcal{C}_i \in \mathcal{S}^\dagger(\mathcal{G}^+),
% \end{equation}
% i.e., for each cycle, we need to satisfy a quadratic constraint, which is essentially nonconvex. 

The pressure is defined pointwise in space, and therefore, we consider a pressure variable $p_n$, for each $n \in \mathcal{N}$. Let $\mathbf{p}_\mathrm{n} \in \mathbb{R}^{\mj}$ be the vector of nodal pressures and $\Delta \mathbf{p}_\mathrm{e} \in \mathbb{R}^{\mep}$ be the vector of pressure differences over all edges. Combining \eqref{eq:pipe_pdrop}-\eqref{eq:valve_pdrop}, we have
\begin{equation} \label{eq:delta_p}
    -E^\top \mathbf{p}_\mathrm{n} = \Delta \mathbf{p}_\mathrm{e} = {R}(\nu) \left(q \odot q \right) - {H}(r),
\end{equation}
% where $R(\nu) = R_\mu + R_v(\nu)$ is a matrix that has on its diagonal the sum of resistances due to friction and valve effects on each edge, 
where $\odot$ denotes the element-wise product of two vectors,
$R(\nu) = R_\mu + R_\nu(\nu)$ is a diagonal matrix that has on its diagonal the sum of resistances due to friction and valve effects on each edge,
and $H(r)$ is the vector of induced pump \cm{pressure difference} for each edge. %, i.e., 
%$H(r):=[h_e(r_e)]_{e=1}^{\mep}$.
As in \cite{Machado2022}, we multiply \eqref{eq:delta_p} from the left by $F$ to obtain the sum of pressure difference over all fundamental cycles as
\begin{equation} \label{eq:basis_cycle_pressure}
    -F E^\top \mathbf{p}_\mathrm{n} = \sum_{e \in \mathcal{C}_{i}} \Delta p_e = 0, \quad \quad \forall \mathcal{C}_i \in \mathcal{S}_f(\mathcal{G}^+).
\end{equation}
Subsequently, combining \eqref{eq:delta_p} and \eqref{eq:basis_cycle_pressure} leads to
\begin{equation}  \label{eq:equality}
    FR(\nu)(q\odot q) = FH(r),
\end{equation}
implying that the sum of all pressure drops due to valve effects and friction in any directed cycle of the graph should be equal to the sum of all induced pump \cm{pressure differences} within that same cycle. Similarly, we introduce the inequality
\begin{equation} \label{eq:inequality}
    F_r{R}_\mu(q\odot q) \leq F_rH(\mathbf{1}), 
\end{equation}
%which corresponds to the scenario where all valves are maximally open, that is $\nu_e = 0$, for all $e\in\mathcal{E}^+$, and all pumps are operated at maximum capacity, i.e., $r_e = 1, \; \forall e\in\mathcal{E}^+$.
where $\mathbf{1}$ denotes the vector of all ones, i.e., $\mathbf{1} = [1,\,\dots,\,1]^\top$. Equation \eqref{eq:inequality} represents an upper bound on the induced \cm{pressure difference} along each loop and corresponds to the scenario where all valves are maximally open and all pumps are operated at maximum capacity, i.e.,  $\nu_e = 0, \, \forall \, e\in \mathcal{V}$, and $r_e = 1,\, \forall \, e\in \mathcal{P}$.
%, for all $e\in\mathcal{E}^+$.
%Lastly, before stating the proposition, we pose the following assumption.

Before proceeding to the main result of this section, we need to introduce the following assumption.

\begin{assumption} \normalfont\label{ass:unique_valves}
   There are %{at least}
   $|\mathcal{V}| \geq \mf$ edges with a valve placed throughout the network, and the position of these edges is such that rank($F\Pi) = \mf$, where %$\Pi:=\big[\Pi_{ek}\big]_{e,k=1}^{\mep, |\mathcal{V}|} \in \R^{\mep \times |\mathcal{V}|}$
   $\Pi \in \{0, 1\}^{\mep \times |\mathcal{V}|}$ is a selection matrix with entries
   \begin{align*}
       \Pi_{ek} = \begin{cases}
           1, \quad \text{if valve $k$ is on edge $e$},\\
           0, \quad \text{otherwise},
       \end{cases}
   \end{align*}
   for any $e\in\{1,2,\ldots,\mep\}$ and any $k\in\{1,2,\ldots,|\mathcal{V}| \}$.  
   Since $F\Pi$ is full row rank, it can be mapped back to $F$ through a linear transformation, 
   %there exists a linear combination of its columns that maps $F\Pi$ back to $F$, 
   i.e., we have 
   \begin{equation}
       F = F\Pi \Psi,
   \end{equation}
   where $\Psi = (F\Pi)^\dagger F$ and $(\cdot)^\dagger$ denotes the Moore-Penrose pseudo-inverse.
   %\footnote{the dagger notation represents the pseudo-inverse defined as $M^\dagger = M^\top (MM^\top)^{-1}$.} 
   Let $m_\Theta$ denote the dimension of the null space of matrix $F\Pi$.   
   We assume that there exist matrices $Z_1, Z_2 \in \R^{m_\Theta \times \mep}$
   %, with $m_\Theta = \mathrm{nullity}(F\Pi)$, 
   such that
   \begin{equation} \label{eqn:ass1}
       (\Psi + \Theta_{F\Pi}Z_1)H(\mathbf{1}) \ge (\Psi + \Theta_{F\Pi}Z_2)R_\mu(q \odot q)
   \end{equation}
   where $\Theta_{F\Pi} \coloneqq I - (F \Pi)^\dagger F\Pi$ denotes the kernel of $F\Pi$. 
\end{assumption}

Before proceeding further, we need to highlight several remarks about the introduced assumption.

\begin{remark} \normalfont \label{rem:assumption1}
Multiplying \eqref{eqn:ass1} from the left by $F_r\Pi$ yields \eqref{eq:inequality}. In the majority of cases, the validity of \eqref{eq:inequality} implies that \eqref{eqn:ass1} is also satisfied. Nonetheless, \eqref{eqn:ass1} can be explicitly guaranteed by imposing it as a constraint within the MPC formulation introduced in Section~\ref{sec:EMPC}. Note that, to ensure convexity, it may be required to determine $Z_2$ a priori such that 
%$(\Psi + \Theta_{F\Pi} Z_2)_{ij} \geq 0$, for all $i,j$.
$\Psi + \Theta_{F\Pi} Z_2$ is a matrix with nonnegative entries.
\hfill$\diamond$
\end{remark}

\begin{remark} \normalfont \label{rem:valve_placement}
One can easily ensure the existence of matrix $Z_2$ such that the right-hand side of \eqref{eqn:ass1} is element-wise nonnegative. 
To this end, we need to consider only the supply section of the DHN. Then, the corresponding graph consists of source nodes coming from producers, sink nodes reaching consumers, and intermediate nodes that represent junctions. Additionally, we consider networks where the degree of each intermediate node is at most three, meaning that each junction is either a \textit{splitting node} or a \textit{merging node}. %Moreover, each directed cycle, as previously defined, can now be described as a path from a producer to a consumer. Subsequently, the valves should be placed in a reverse cascading order, i.e., starting from a consumer we move opposite until an intermediate node is reached. This node must be either {splitting} or {merging} flows. For the former case, valves should be present on both outgoing edges, except if one of the outgoing edges directly connects to another splitting node; in such instances, a valve is not needed on that edge. For the latter case, we can skip to the next node without specifying any valve requirements, unless one or both of the incoming edges come from producer nodes, then these edges should be equipped with a valve.
%Directed cycles correspond to producer-consumer paths. 
Valves are positioned in reverse cascading order from consumers to intermediate nodes. At splitting nodes, valves are required on outgoing edges, except when an edge connects directly to another splitting node. For merging nodes, valve placement is only necessary on edges originating from producer nodes; otherwise, the analysis proceeds to the subsequent node.
% \begin{enumerate}
%     \item The node is a splitting node. Here, valves are placed on both outgoing edges, except if one of the outgoing edges directly connects to another splitting node; in such instances, a valve is not installed on that edge.
%     \item The node is a merging node. In this case, no valves are placed on any edges, unless one or both of the incoming edges come from producers nodes, then these edges are equipped with a valve.
% \end{enumerate}
This process is repeated until all edges are traversed. We illustrate this procedure in Figure~\ref{fig:remark_fig}. The rationale for this valve checking strategy is that it ensures the flow on any edge lacking a valve becomes a nonnegative linear combination of the flows on edges equipped with valves. Formally, this implies the existence of a matrix $\Psi + \Theta_{F\Pi} Z_2$ with exclusively nonnegative entries. 
\hfill$\diamond$\\
%Moreover, even without considering any additional constraints, this approach guarantees the existence of solutions to \eqref{eqn:ass1} for all networks examined in this study. 
% \cm{\textbf{Remark to practitioners.} While the valve placement strategy may appear extensive, it reflects the fundamental requirement that a high degree of freedom in flow control requires a sufficiently high number of strategically placed pumps or valves. 
% In practical situations, where valve placement is restricted by physical or economic constraints, it is possible to incorporate these limitations directly into the optimization framework through additional constraints to ensure that \eqref{eqn:ass1} remains satisfied.}
% \hfill$\diamond$\\
% \cm{\textbf{Remark to practitioners.} When physical limitations or budget constraints prevent placing all theoretically required valves according to the procedure in Remark~\ref{rem:valve_placement}, analyzing \eqref{eqn:ass1} helps determine where fewer strategically placed valves can still preserve convexity.
% In cases where achieving convexity is not feasible, \eqref{eqn:ass1} serves as a practical tool for practitioners to uncover inherent system limitations. These insights can support the design phase by identifying the most impactful valve placements or guide the development of operational constraints that satisfy \eqref{eqn:ass1} within practical limitations.
% }\\
\cm{\textbf{Remark to practitioners.} While the valve placement strategy may appear extensive, it reflects the fundamental requirement that a high degree of freedom in flow control requires a sufficiently high number of strategically placed pumps or valves. 
In practical situations, where valve placement is restricted by physical or economic constraints, analyzing \eqref{eqn:ass1} helps determine where fewer valves can still preserve convexity.
In cases where achieving convexity is not feasible, it serves as a practical tool for practitioners to uncover inherent system limitations. These insights can support the design phase by identifying the most impactful valve placements or guide the development of operational constraints that satisfy \eqref{eqn:ass1} within practical limitations.}
\hfill$\diamond$
\end{remark}

% \begin{remark}[Remark to Practitioners] \normalfont
% \cm{
% %While the valve placement strategy may appear extensive, it reflects the fundamental requirement that a high degree of freedom in flow control requires a sufficiently high number of strategically placed pumps or valves. 

% \end{remark}

\begin{figure}[thbp]
\centering
\includegraphics[width=\linewidth, trim={6.5cm 7.4cm 6.5cm 7.4cm}, clip]{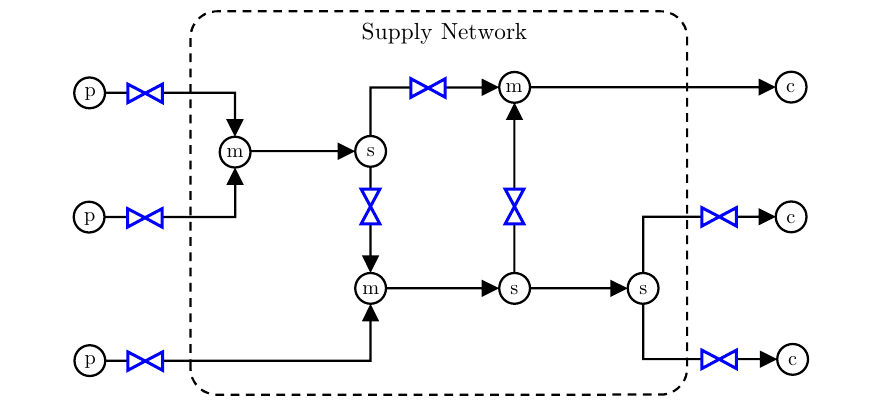}
\caption{The valve checking procedure, described in Remark~\ref{rem:valve_placement}, illustrated by showing the flow from producers (p) to consumers (c) going through the supply network consisting of merging nodes (m) and splitting nodes (s). }
\label{fig:remark_fig}
    
\end{figure}

\begin{remark} \normalfont
\cm{The valve placement approach described in Remark \ref{rem:valve_placement} extends naturally to networks with bidirectional edges, where nodes connect to edges that can experience flow in both directions are treated as both merging and splitting nodes to account for all operational modes. Nevertheless, for bidirectional edges, a single physical valve suffices since only one flow direction is active at any given time.} 
%Therefore, the total number of required valves remains bounded by $|\mathcal{V}|$.}
Furthermore, one should note that the introduced assumption on the total number of required valves aligns with the literature on DHNs~\cite{Persis2011, Machado2022}, where each chord of the graph's spanning tree is equipped with a valve. 
\hfill$\diamond$
\end{remark}

The following proposition guarantees that \eqref{eq:inequality} is a sufficient condition for \eqref{eq:equality}.
%Through the following propositions, we present the main result of this section.
% Now we are ready to state the proposition.
%
\begin{proposition} \label{prop} \normalfont Let Assumption \ref{ass:unique_valves} hold. 
% Then, the following statement holds.
% For any fixed $q = \hat{q}$ that satisfies \eqref{eq:inequality}, there exists a ${\nu} \in [0,\infty)$ and ${r} \in [0, 1]$ such that \eqref{eq:equality} is satisfied.
Then, for any $q$ that satisfies \eqref{eq:inequality}, there exist  ${\nu} \in \R_+^{|\mathcal{V}|}$ and ${r} \in [0, 1]^{|\mathcal{P}|}$ such that \eqref{eq:equality} is satisfied.
\end{proposition}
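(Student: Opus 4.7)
My plan is to reduce the equality~\eqref{eq:equality} to an existence question for a nonnegative vector in an affine subspace, then realise that vector using~\eqref{eqn:ass1} with the pumps pinned at full capacity. Valves act only on edges in $\mathcal{V}$, so I would introduce $\xi \in \R^{|\mathcal{V}|}$ with entries $\xi_k = R_{\nu,e_k}(\nu_{e_k})\,q_{e_k}^2$ and write $R_\nu(\nu)(q\odot q) = \Pi\xi$. Since each $R_{\nu,e_k}(\cdot)$ is strictly monotone on $[0,\infty)$ with $R_{\nu,e_k}(0)=0$ and $R_{\nu,e_k}\to\infty$, any $\xi_k \ge 0$ is realised by a unique $\nu_{e_k} \in [0,\infty)$ whenever $q_{e_k} \neq 0$, and $\nu_{e_k}=0$ handles the degenerate case $q_{e_k}=0$; so it suffices to work in $\xi$-coordinates. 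Similarly, $H(r) = H(\mathbf{1})$ corresponds to the admissible choice $r = \mathbf{1} \in [0,1]^{|\mathcal{P}|}$.

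Splitting $R(\nu)=R_\mu + R_\nu(\nu)$ and substituting $F = F\Pi\,\Psi$ from Assumption~\ref{ass:unique_valves} turns~\eqref{eq:equality} into
\begin{equation*}
F\Pi\bigl(\xi - \Psi H(r) + \Psi R_\mu(q\odot q)\bigr) = 0,
\end{equation*}
so it is enough to exhibit $r \in [0,1]^{|\mathcal{P}|}$ and $\eta \in \ker(F\Pi)$ such that $\xi := \Psi H(r) - \Psi R_\mu (q\odot q) + \eta \ge 0$. I would pick $r = \mathbf{1}$ and $\eta = \Theta_{F\Pi}\bigl(Z_1 H(\mathbf{1}) - Z_2 R_\mu(q\odot q)\bigr)$ with $Z_1,Z_2$ the matrices supplied by Assumption~\ref{ass:unique_valves}. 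Since $F\Pi\,\Theta_{F\Pi} = 0$, this $\eta$ lies in $\ker(F\Pi)$, and substitution yields
\begin{equation*}
\xi = (\Psi + \Theta_{F\Pi}Z_1)H(\mathbf{1}) - (\Psi + \Theta_{F\Pi}Z_2)R_\mu(q\odot q),
\end{equation*}
which is elementwise nonnegative by~\eqref{eqn:ass1}. Recovering $\nu \in \R_+^{|\mathcal{V}|}$ from $\xi$ by inverting the monotone maps $R_{\nu,e_k}$ then produces a pair $(\nu,\mathbf{1})$ satisfying~\eqref{eq:equality}.

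The main obstacle I expect is purely bookkeeping around $\Theta_{F\Pi}$: the two identities $F\Pi\,\Theta_{F\Pi} = 0$ and $F = F\Pi\,\Psi$ must be read off the Moore--Penrose definition of $\Psi$, and both hinge on $F\Pi$ being full row rank, which is the substantive part of Assumption~\ref{ass:unique_valves}; given these, everything downstream is mechanical. One caveat worth noting is that the argument does not invoke the hypothesis~\eqref{eq:inequality} directly but rather the strictly stronger inequality~\eqref{eqn:ass1} baked into the assumption, in line with Remark~\ref{rem:assumption1}, which observes that~\eqref{eq:inequality} is obtained from~\eqref{eqn:ass1} by left multiplication with $F_r\Pi$ and thus plays the role of the weaker, loop-wise scalar condition actually enforced at the optimization level.
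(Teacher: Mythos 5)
Your proposal is correct and follows essentially the same route as the paper's proof: decompose $R(\nu)=R_\mu+R_\nu(\nu)$, absorb the valve action into a nonnegative vector supported on $\mathcal{V}$ via $\Pi$ (your $\xi$ is the paper's $y$), set $r=\mathbf{1}$, parametrize the solution set of the resulting linear system through $\ker(F\Pi)$ using $\Theta_{F\Pi}$, and invoke \eqref{eqn:ass1} for elementwise nonnegativity. Your closing observation that the argument really leans on \eqref{eqn:ass1} rather than on \eqref{eq:inequality} directly is accurate and consistent with Remark~\ref{rem:assumption1}; the paper's proof cites \eqref{eq:inequality} only to note that nonnegativity of the right-hand side of the linear system is necessary.
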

\vspace{0mm}
\begin{proof}
    %% Given the definition of $\Pi$ and $H(r)$, we know that there exists vector $x\in\mathbb{R}^{\mf}$, which represents a scaled substitute of the pump rates $r$, constrained by $0 \leq x \leq \bar{x}$,  with $\Pi \bar{x} = H(\mathbf{1})$, such that $H(r) = \Pi x$.
    %Given the definition of $\Pi$ and $H(r)$, we know that there exists vector $x\in\R^{\mf}$ such that $H(r) = \Pi x$, where $x$ represents a scaled substitute of the pump rates $r$, constrained by $0 \leq x \leq \overline{x}$, with $\Pi \overline{x} = H(\mathbf{1})$.
    %%Given the definitions of $\Pi$ and $H(r)$, there exists a vector $x \in \mathbb{R}^{\mf}$ such that $H(r) = \Pi x$, where $x$ is a scaled version of the pump rates $r$, constrained by $0 \leq x \leq \bar{x}$, with $\Pi \bar{x} = H(\mathbf{1})$.
    %Similarly, there exists a vector $y \in \R^{\mf}$ with nonnegative entries, denoting a scaled version of the valve openings $\nu$, such that $R_\nu(\nu)({q} \odot {q}) = \Pi y$. 
    %
    %
    %Firstly, we drop any cycles and edges that have zero mass flow, since they they do not contribute to the solution, i.e., for any $q_{f,i} = 0$ we remove the $i$-th row of $F$ and for any $q_{e} = 0$ we remove the $e$-th column of $F$. The remaining matrix is denoted as $\tilde{F}$. 
    Recall that, for any valve openings $\nu$, we have $R(\nu) = R_\mu + R_\nu(\nu)$, where $R(\nu)$ is a diagonal matrix that has on its diagonal the sum of resistances due to friction and valve effects on each edge. Hence, we have
    \begin{equation}
    % \begin{split}
    %     FR_\mu (q \odot q)  
    %     &
    %     =  FR_\nu(\nu)(q \odot q) - FR(\nu)(q \odot q). 
    % \end{split}
    FR(\nu)(q \odot q) 
    = FR_\mu (q \odot q) + FR_\nu(\nu)(q \odot q).
    \end{equation}
    Furthermore, we know that, for any $\nu$, $q$, and $\Pi$, there exists a positive semi-definite diagonal scaling matrix $D_{R,q}$ and a vector $y = D_{R,q} \nu$, where $y$ represents a scaled substitute of the valve openings, such that we can write $R_\nu(\nu)({q} \odot {q}) = \Pi y$. Therefore,  we have
    \begin{equation}
    FR(\nu)(q \odot q) 
    = FR_\mu (q \odot q) + F\Pi  y.
    \end{equation}
    %Subsequently, to show that \eqref{eq:equality} holds for some $\nu$ and $r$, we need to show that we have
    Accordingly, to show that \eqref{eq:equality} holds for some $y$ and $r$, we need to verify the same argument for the following equation
    \begin{equation}
    FH(r) 
    = FR_\mu (q \odot q) + F\Pi y.
    \end{equation}
    %Note that, due to the definition of $\Pi$ and $H(\cdot)$, we know that, for any $r$, there exists vector $x\in\R^{\mf}$ such that $H(r) = \Pi x$, where $x$ represents a scaled substitute of the pump rates $r$, constrained by $0 \leq x \leq \overline{x}$, with $\Pi \overline{x} = H(\mathbf{1})$.
    %Thus, we need to show that for some $x$ and $y$, with $0 \leq x \leq \overline{x}$ and $y\ge 0$, we have
    %\begin{equation}\label{eqn:x_y}
    %F\Pi x = FR_\mu (q \odot q) + F\Pi y.
    %FR_\mu (q \odot q) %= F\Pi x - F\Pi y 
    %= F\Pi z,
    %\end{equation}
    %where $z$ is defined as $z:=x-y$.
    %Nonetheless, by Assumption~\ref{ass:unique_valves}, we have that $\mathrm{rank}(F\Pi) = \mf$, and therefore, all columns in $F$ are within the column space of $F\Pi$. Thus, with respect to any $q$, the equation \eqref{eqn:x_y} admits a unique solution for $z$. %, which is not necessarily unique.
    %Moreover, due to \eqref{eq:inequality} and definition of $\overline{x}$, we have 
    %\begin{equation}\label{eqn:x_y}
    %F\Pi z = FR_\mu (q \odot q) \le %FH(\mathbf{1}) = F\Pi \overline{x}.
    %\end{equation}
    %\cmr{Since all the entries of $F$ and $\Pi$ belong to $\{0,1\}$, we know that the unique solution to \eqref{eqn:x_y} needs to satisfy $z \leq \overline{x}$}, and, therefore, we can find corresponding feasible values for the scaled pump rates $ 0\leq x \leq \overline{x}$ and the scaled valve positions $y \geq 0$, with $z=x-y$.
    %Therefore, ${\nu} \in \R_+^{m_f}$ and ${r} \in [0, 1]^{m_f}$ exist such that \eqref{eq:equality} is satisfied. This concludes the proof. 
    {If we set $r = \mathbf{1}$, then it is enough to show that there exists $y \in [0, \infty)^{|\mathcal{V}|}$ such that 
    \begin{equation} \label{eq:p1_extra}
        F\Pi  y = F\left(H(\mathbf{1}) - R_\mu(q \odot q) \right).
    \end{equation}
    From \eqref{eq:inequality}, we have that the right-hand side of \eqref{eq:p1_extra} is element-wise nonnegative, which is necessary for the existence of nonnegative solutions to $y$. Following Assumption \ref{ass:unique_valves}, 
    %we have that $F = F\Pi \Psi$, resulting in
    we know that \eqref{eq:p1_extra} can be written as
    \begin{equation} \label{eq:linear_system}
        F\Pi (y - \Psi H(\mathbf{1}) + \Psi R_\mu(q\odot q)) = 0.
    \end{equation}
    Hence, the solutions to \eqref{eq:linear_system} lie inside the null space of $F\Pi$. Accordingly, we know that, for each $y$ satisfying \eqref{eq:linear_system}, there exists $z \in \R^{m_\Theta}$ such that
    % we can write the solution to $y$ as 
    \begin{equation} \label{eq:p1_extra2}
    \begin{aligned}
        y &= \Psi H(\mathbf{1}) - \Psi R_\mu(q\odot q) + \Theta_{F \Pi} z,\\
        & = (\Psi + \Theta_{F\Pi}Z_1)H(\mathbf{1}) - (\Psi + \Theta_{F\Pi}Z_2)R_\mu(q \odot q),
    \end{aligned}
    \end{equation}
    where 
    % for some $z \in \R^{m_\Theta}$ and 
    $Z_1,Z_2 \in \R^{m_\Theta \times \mep}$ are the matrices introduced in Assumption~\ref{ass:unique_valves}.
    Thus, due to \eqref{eqn:ass1}, we have that $y$ is a vector with nonnegative entries, which implies that $\nu \geq 0$. More precisely, there exist  ${\nu} \in \R_+^{|\mathcal{V}|}$ and ${r} \in [0, 1]^{|\mathcal{P}|}$ such that \eqref{eq:equality} is satisfied. This concludes the proof.
    %there exists a solution $\nu \in \R_+^{|\mathcal{V}|}$ and $r \in [0,1]^{|\mathcal{P}|}$. This concludes the proof.
    %where, if $F\Pi$ is square, we obtain 
    %\begin{equation}
    %z = (F \Pi)^{-1}FR_\mu(q\odot q)
    %\end{equation}
    }
    %
    %
    % \begin{equation}
    % \begin{split}
    %     FR(\nu)(q \odot q) 
    %     &
    %     = FR_\mu (q \odot q) + FR_\nu(\nu)(q \odot q)
    %     \\&
    %     \le FH(\mathbf{1}) + FR_\nu(\nu)(q \odot q)
    % \end{split}
    % \end{equation}
    % %
    %
    %
    %
    %
    % Then, equation \eqref{eq:equality} is equal to 
    % \begin{equation}
    % \begin{aligned}
    %     &Fc = F\Pi x - F\Pi y,
    % \end{aligned}
    % \end{equation}
    % with $c = R_\mu({q} \odot {q})$. Then, we substitute $z = x - y$ such that 
    % \begin{equation} \label{eq:proof1}
    %     Fc = F\Pi z,
    % \end{equation}
    % with $-\infty \leq z \leq \overline{x}$. Subsequently, by Assumption~\ref{ass:unique_valves} we have rank$(F\Pi) = \mf$, therefore, all columns in $F$ are within the column space of $F\Pi$, which means there exists a solution to \eqref{eq:proof1}. Additionally, equation \eqref{eq:inequality} ensures that 
    % \begin{equation}
    %     Fc \leq F\Pi \overline{x},
    % \end{equation}
    % which, in combination with nonnegativity of $F$, means that $z > \overline{x}$ can not be a solution to \eqref{eq:proof1}. Thus, the solution to \eqref{eq:proof1} must lie within a feasible operating domain $-\infty \leq z \leq \overline{x}$. This concludes the proof. 
    %
    % rank argument
    % There exist F R q o q = F Pi z has solution for z
    % define 
\end{proof}

The previous proposition guarantees, for any $q$ satisfying \eqref{eq:inequality}, the existence of $\nu$ and $r$ satisfying \eqref{eq:equality}, and, therefore, satisfying \eqref{eq:basis_cycle_pressure}. In the following proposition, we show that \eqref{eq:basis_cycle_pressure} implies \eqref{eq:Khoff}, meaning that Kirchhoff's second law is fulfilled.
\begin{proposition} \normalfont \label{prop2}
   % Let Proposition~\ref{prop2} hold. If there exists a fixed $(q,\nu,r) = (\hat{q}, \hat{\nu}, \hat{r})$ such that \eqref{eq:basis_cycle_pressure} is satisfied, then \eqref{eq:Khoff} is also satisfied. Consequently, \eqref{eq:basis_cycle_pressure} is a  necessary and sufficient condition for \eqref{eq:Khoff}. 
   Under Assumption~\ref{ass:unique_valves}, we have that \eqref{eq:basis_cycle_pressure} is a necessary and sufficient condition for \eqref{eq:Khoff}, i.e., if there exists a $(q,\nu,r)$ such that \eqref{eq:basis_cycle_pressure}  holds, then \eqref{eq:Khoff} is also satisfied. 
\end{proposition}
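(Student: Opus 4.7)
The plan is to address necessity and sufficiency separately, with sufficiency being the substantive part. For necessity, observe that every fundamental cycle $\mathcal{C}_i \in \mathcal{S}_f(\mathcal{G}^+)$ is itself a directed cycle in $\mathcal{G}^+$, hence a cycle in $\mathcal{S}^\dagger(\mathcal{G}^+)$ once edge orientations are forgotten. Consequently, if \eqref{eq:Khoff} holds on every cycle in $\mathcal{S}^\dagger(\mathcal{G}^+)$, it holds in particular on every fundamental cycle, yielding \eqref{eq:basis_cycle_pressure}.

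For sufficiency, I would proceed algebraically. Associate to each cycle $\mathcal{C}_i \in \mathcal{S}^\dagger(\mathcal{G}^+)$ an oriented incidence vector $c_i \in \{-1, 0, 1\}^{\mep}$ encoding the edges traversed together with their direction, so that $\sum_{e \in \mathcal{C}_i} \Delta p_e = c_i^\top \Delta \mathbf{p}_\mathrm{e}$. The central step is to argue that $c_i$ lies in the row space of $F$, i.e., that there exists $\alpha_i \in \R^{\mf}$ with $c_i = F^\top \alpha_i$. Once this representation is in place, combining $\Delta \mathbf{p}_\mathrm{e} = -E^\top \mathbf{p}_\mathrm{n}$ from the first equality in \eqref{eq:delta_p} with $F E^\top \mathbf{p}_\mathrm{n} = 0$ from the first equality in \eqref{eq:basis_cycle_pressure} yields
\begin{equation*}
\sum_{e \in \mathcal{C}_i} \Delta p_e = c_i^\top \Delta \mathbf{p}_\mathrm{e} = \alpha_i^\top F \Delta \mathbf{p}_\mathrm{e} = -\alpha_i^\top F E^\top \mathbf{p}_\mathrm{n} = 0,
\end{equation*}
which is exactly \eqref{eq:Khoff}.

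The hard part will be rigorously establishing that the rows of $F$ span the cycle space of $\mathcal{G}^+$. By construction, the row space of $F$ coincides with that of $F_r$ after the QR factorization, and the rows of $F_r$ correspond only to symmetric directed cycles of length at least three; steps~2 and~3 of the construction explicitly discard two-node cycles and asymmetric cycles. To close this gap, I would argue that (i) two-node cycles, which traverse a bidirectional pipe in both directions, satisfy \eqref{eq:Khoff} trivially because the complementarity condition $q_e^+ q_e^- = 0$ in \eqref{eq:complementary_con} forces only one direction to carry flow at any given time; and (ii) every asymmetric cycle in $\mathcal{G}^+$ decomposes into a linear combination of symmetric cycles modulo two-node cycles, exploiting the mirrored supply-return architecture of DHNs. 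Assumption~\ref{ass:unique_valves} plays a supporting role by ensuring that $F\Pi$ is full row rank, which fixes the rank of $F$ at $\mf$, consistent with the cycle space dimension after these reductions.
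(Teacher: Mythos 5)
Your proof takes essentially the same route as the paper's: both reduce sufficiency to expressing the incidence vector of an arbitrary cycle in $\mathcal{S}^\dagger(\mathcal{G}^+)$ as a (integer) linear combination of the rows of $F$ and then concluding that the summed pressure differences vanish via \eqref{eq:basis_cycle_pressure}. The only substantive difference is that the paper simply asserts that the $\mf$ fundamental cycles form a cycle basis of $\mathcal{G}^+$, whereas you explicitly flag this as the step requiring justification and sketch how to account for the discarded two-node and asymmetric cycles — a gap the paper's own proof leaves implicit.
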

\begin{proof}
    %Firstly, necessity follows from the definition of \eqref{eq:Khoff} and \eqref{eq:basis_cycle_pressure}, since $\mathcal{S}_f \subseteq \mathcal{S}^\dagger$.
    Considering $\mathcal{S}_f \subseteq \mathcal{S}^\dagger$, from the definition of \eqref{eq:Khoff} and \eqref{eq:basis_cycle_pressure}, we know that \eqref{eq:Khoff} implies \eqref{eq:basis_cycle_pressure}. 
    %Therefore, we only need to show sufficiency.
    Therefore, we only need to prove the sufficiency part of the claim.
    
    Let $F_i$ be the $i^{{\text{th}}}$ row of $F$, i.e., $F_i$ is a vector with elements equal to $F_{ij} = 1$ if  $j \in \mathcal{C}_i$ and $0$ otherwise. All cycles $\mathcal{C}_1, \mathcal{C}_2, \dots, \mathcal{C}_{\mf}$ form a cycle basis for $\mathcal{G}^+$, which means that there are $\mf$ linearly independent basis vectors $F_1, \dots F_{\mf}$ spanning the basis of all other cycles in the graph. As a result, any vector $F^{(\mathcal{C})}$, which corresponds to a cycle $\mathcal{C} \in \mathcal{S}^\dagger \backslash \mathcal{S}_f$, can be constructed by $F_1, \dots F_{\mf}$ as the following integer linear combination
    \begin{equation}
    F^{(\mathcal{C})} = \sum_{i=1}^{\mf} a_i F_i,
    \end{equation}
    where $a_1,\ldots,a_{\mf}\in\mathbb{Z}$.
    Therefore, for the sum of pressure differences over $\mathcal{C}$, we have 
    \begin{equation} 
    \begin{aligned}
    \sum_{e \in \mathcal{C}} \Delta p_e &= F^{(\mathcal{C})} \Delta \mathbf{p}_\mathrm{e}\\
    &= \sum_{i=1}^{\mf} a_i F_i \Delta \mathbf{p}_\mathrm{e}
    = \sum_{i=1}^{\mf} a_i \sum_{e \in \mathcal{C}_{i}} \Delta p_e.
    % \\&
    % \  
    % \stackon{$=$}{\scriptsize \eqref{eq:basis_cycle_pressure}} 
    % \ 
    % \sum_{i=1}^{\mf} a_i (0) = 0.
    \end{aligned}
    \end{equation}
    Subsequently, from \eqref{eq:basis_cycle_pressure}, it is implied that
    \begin{equation} 
    \sum_{e \in \mathcal{C}} \Delta p_e = 0,
    \end{equation}
    which concludes the proof.
\end{proof}
% Subsequently, 
In the following proposition, we present the key result regarding the convexity of \eqref{eq:inequality}.
\begin{proposition} \label{prop3}
    The equation \eqref{eq:inequality} is convex with respect to flow vectors $q$ and $q_r$.
\end{proposition}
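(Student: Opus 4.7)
The plan is to show that the inequality $F_r R_\mu (q \odot q) \leq F_r H(\mathbf{1})$ defines a convex feasible set in both $q$ and $q_r$, by recognising the left-hand side as a nonnegative linear combination of convex quadratics and the right-hand side as a constant vector.

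First, I would examine the left-hand side component by component. Each entry of $q \odot q$ is of the form $q_e^2$, which is a convex function of $q$. When viewed as a function of $q_r$, we use $q = F_r^\top q_r$ from \eqref{eq:id_flows}, so each entry becomes $(F_r^\top q_r)_e^2 = \bigl(\sum_i F_{r,ie}\, q_{r,i}\bigr)^2$, which is the square of an affine function of $q_r$ and therefore convex in $q_r$. Hence $q \odot q$ is an elementwise convex vector-valued mapping in both variables.

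Second, I would exploit the sign structure of the matrices multiplying $q \odot q$. By the construction procedure introduced earlier in this section, $F_r$ has entries in $\{0,1\}$; in particular, it is elementwise nonnegative. The matrix $R_\mu$ is diagonal with strictly positive entries $R_{\mu,e} = 8\rho L_e K_e / (\pi^2 d_e^5)$ from \eqref{eq:pipe_pdrop}. Consequently, $F_r R_\mu$ is a matrix with nonnegative entries, and each component of $F_r R_\mu (q \odot q)$ is a nonnegative linear combination of the convex functions $\{R_{\mu,e} q_e^2\}_e$. Since nonnegative linear combinations preserve convexity, each component of the left-hand side is a convex function of $q$ (and equivalently of $q_r$).

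Third, the right-hand side $F_r H(\mathbf{1})$ is a constant vector, since $H(\mathbf{1})$ corresponds to evaluating the pump map at the fixed input $r = \mathbf{1}$. The claim then follows from the standard fact that a sublevel set $\{x : f_i(x) \leq c_i,\; i=1,\ldots,m\}$ of convex functions $f_i$ is a convex set, applied componentwise to \eqref{eq:inequality}.

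I do not foresee a significant obstacle here; the statement is essentially a bookkeeping consequence of the sign pattern of $F_r$ together with the convexity of the scalar map $t \mapsto t^2$. The only point worth emphasising in the write-up is why the earlier construction of $F_r$ yields a nonnegative matrix — which is precisely why the authors introduced the reduced loop matrix alongside the fundamental loop matrix $F$, noting that $F$ (with entries in $\{-1,0,1\}$) would not preserve convexity under the same manipulation.
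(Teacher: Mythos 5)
Your proof is correct and follows essentially the same route as the paper: the paper packages the row-wise nonnegative combination $\sum_{e \in \mathcal{C}_i} R_{\mu,e} q_e^2$ as a positive semidefinite quadratic form $q^\top \mathbf{R}_\mu^i q$ with $\mathbf{R}_\mu^i = \mathrm{diag}(F_{r,i})R_\mu$, and obtains convexity in $q_r$ by the same substitution $q = F_r^\top q_r$, yielding the PSD form $q_r^\top F_r \mathbf{R}_\mu^i F_r^\top q_r$. Your closing remark on why the nonnegativity of $F_r$ (as opposed to the sign-indefinite $F$) is the crux matches the paper's stated motivation for introducing the reduced loop matrix.
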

% \cmr{$
% q = F_r^\top q_r \qquad  F{R}_\mu(q\odot q) \leq FH(\mathbf{1})
% $}
\begin{proof}
% We can write \eqref{eq:inequality} as
% \begin{equation} \label{eq:convexKhoff}
%     \sum_{e \in \mathcal{C}_i} R_{\mu,e}q_e^2 \leq \sum_{e \in \mathcal{C}_i}  h_e(1), \ \quad \forall\, i = 1, \dots, \mf.
% \end{equation}
For $i = 1, \dots, \mf$, let matrix $\mathbf{R}_\mu^i$ be defined as 
\begin{equation}
    \mathbf{R}_\mu^i = \mathrm{diag}(F_{r,i})R_\mu,    %\ \quad\forall\, i = 1, \dots, \mf.
\end{equation}
which is a diagonal matrix with nonnegative entries, and thus, positive semidefinite.
According to the definition of $F_r$, $R_\mu$, $h_e$, and equation \eqref{eq:inequality}, we have 
\begin{equation} \label{eq:prop3_eq}
   q^\top \mathbf{R}_\mu^i  \, q  
   =
   \sum_{e \in \mathcal{C}_i} R_{\mu,e}q_e^2
   \leq 
   \sum_{e \in \mathcal{C}_i}  h_e(1) 
   =
   \sum_{e \in \mathcal{C}_i}  c_e, 
\end{equation}
for any $i = 1, \dots, \mr$, which is equivalent to \eqref{eq:inequality} and implies that it is convex with respect to $q$.
For any $i = 1, \dots, \mr$, define matrix $\mathbf{Z}^i$ as
\begin{equation}
    \mathbf{Z}^i = F_r \mathbf{R}_\mu^i F_r^\top. 
\end{equation}
Note that the positive semidefiniteness of $\mathbf{R}_\mu^i$ implies the same property for $\mathbf{Z}^i$, for each $i = 1, \dots, \mr$.
From $q = F_r^\top q_r$, one has 
\begin{equation}
q_r^\top \mathbf{Z}^i  q_r = (F_r^\top q_r) ^\top \mathbf{R}_\mu^i (F_r^\top q_r) = q^\top \mathbf{R}_\mu^i \, q.
\end{equation}
Accordingly, we can write \eqref{eq:prop3_eq}, or equivalently \eqref{eq:inequality}, as
\begin{equation} \label{eq:matKhoff}
    q_r^\top \mathbf{Z}^i  q_r \leq \sum_{e \in \mathcal{C}_i} c_e, \ \quad \forall i = 1, \dots, \mr,
\end{equation}
which implies the convexity of \eqref{eq:inequality} with respect to $q_r$.
This concludes the proof.
\end{proof}

The proposed propositions indicate the presence of a convex reformulation of Kirchhoff's second law through \eqref{eq:inequality}. Proposition~\ref{prop} shows that this reformulation ensures the existence of a \cm{feasible set of valve openings} necessary to achieve the specified flow vector $q$. Moreover, Proposition~\ref{prop2} describes the sufficiency of \eqref{eq:basis_cycle_pressure} for satisfying \eqref{eq:Khoff}. Lastly, we demonstrate the convexity of \eqref{eq:inequality} through a straightforward transformation as detailed in Proposition~\ref{prop3}.

\begin{remark} \normalfont
    \cm{More generally, we expect the results to hold for other convex pressure-flow relationships, e.g., $\Delta p = \mathcal{R}_\mu(q)$, not just the quadratic one typical for turbulent flow in DHNs. When replacing \eqref{eq:pipe_pdrop} with such a function, the optimization problem preserves convexity, since the left-hand side of \eqref{eq:inequality} remains a sum of convex functions. Since Propositions~\ref{prop} and \ref{prop2} are independent of the specific form of $\mathcal{R}_\mu(q)$, the arguments should still apply.
    \hfill$\diamond$}
\end{remark}

\section{DHN Model: A Tractable Reformulation for the Thermal Dynamics} \label{sec:thermaldyn}
To obtain a tractable formulation for the thermal dynamics in the network, a suitable discrete spatial approximation of \eqref{eq:advec_ener2} is required, for which we apply an upwind scheme. Therefore, the dynamics of the $i^{{\text{th}}}$ finite volume cell of water can be described by the following scalar continuous-time ordinary differential equation~(ODE):
\begin{equation}\label{eq:1}
\begin{aligned}
    V_i\Dot{T}_i &= -q_i(T_i - T_{i-1}) - \alpha_i (T_i - T_{a}) + w_i,\\
    y_i &= q_i(T_i - T_a),
\end{aligned}
\end{equation}
where $V_i = \Phi_i \tau_{x_i}$ denotes the cell {volume}, $q_i$ denotes the mass flow of water, $T_i$ denotes the temperature of water in the cell, $T_{i-1}$ is the temperature of the inflow into the cell, $\alpha_i = {4U_i V_i}/{\rho c_p d_i}$ is the heat loss coefficient, $T_a$ is the ambient temperature, $w_i$ is a variable denoting the transfer of heat from or to the environment, and $y_i$ is the output which is proportional to the exergy. Note that, similarly to \cite{Machado2022}, when the cell represents a heat exchanger, $w_i$ indicates the transfer of heat from one side to the other side, and otherwise, the term $w_i$ can be dropped from \eqref{eq:1}. Hence, $w_i$ is a control variable if it corresponds to a controllable producer, and a disturbance if it corresponds to an uncontrollable producer or consumer.

We define the state variable $x_i$ as $x_i = T_i - T_a$, assuming that the ambient temperature is equal and constant for all cells, and consider the corresponding dynamics described as
\begin{equation}\label{eq:dynamics}
\begin{aligned}
    V_i\dot{x}_i &= -(q_i+\alpha_i)x_i + u_{i} +w_i,\\
    y_i &= q_ix_i,
\end{aligned}
\end{equation}
Note that \eqref{eq:1} is equivalent to \eqref{eq:dynamics} when $u_{i} = q_i x_{i-1}$. From here on, we consider any finite volume cell as a node, or more precisely, a thermal node (TN), in the graph. The thermal node has a compartmental structure as shown in Figure~\ref{fig:thermal_node}. 

\begin{remark} \normalfont
    \cm{The energy transfer rate $w_i$ and flow rate $q_i$ in equation \eqref{eq:dynamics} appear decoupled at first glance, however, their relationship becomes evident under steady-state conditions ($V_i = 0$), where the energy balance shows that $w_i$ increases with $q_i$, leading to greater heat transfer at higher flows.
     \hfill$\diamond$
    }
\end{remark}
% \begin{figure}[thbp]
%     \centering
%     \includegraphics[width = .5\linewidth, trim={2cm 4cm 2cm 4cm}, clip]{Figures/Fig4.pdf}
%     \caption{The thermal node as described by \eqref{eq:dynamics}. }
%     \label{fig:thermal_node}
% \end{figure}

\subsection{State-space Representation of the DHN Interconnection}
To construct the model of the thermal system, we approximate the spatial evolution of temperature along pipelines, described by \eqref{eq:advec_ener2}, through a finite sequence of partitions. We adopt an approach similar to \cite{Simonsson2024}, where nodes represent volumes of water and edges represent flow rates. 

\begin{figure}[htbp]
    %\captionsetup{subtype}  % Reset subfigure settings
    \captionsetup[subfigure]{labelfont={normalfont,small},textfont={normalfont,footnotesize} } 
    \subfloat[\normalfont \cm{The thermal node.}]{
        \raisebox{0.75\height}{\includegraphics[width=0.185\textwidth,trim={12.5cm 10cm 12.5cm 10cm}, clip]{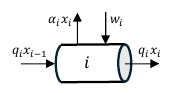}}
        \label{fig:thermal_node}
    }%   % Added % here to remove unwanted space
    \hspace{0.01\textwidth}%   % Explicit spacing
    \subfloat[\normalfont \cm{Refinement of the AROMA model.}]{
        \includegraphics[width=0.25\textwidth,trim={11.4cm 9cm 12cm 8.3cm}, clip]{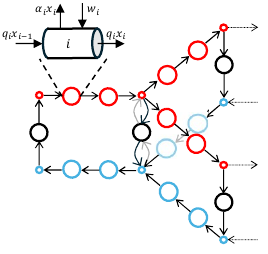}
        \label{fig:oversampling}
    }
    \caption{\cm{Illustration of thermal node model and mesh refinement of the DHN graph. (a) shows the conceptual thermal node model with its dynamics defined by \eqref{eq:dynamics}, while (b) demonstrates the injection of thermal nodes in a section of the AROMA network.}}
    \label{fig:312}
\end{figure}

\subsubsection{Updating the graph}
We introduce intermediate nodes on all edges of the original graph to increase the granularity of the model and improve the approximation of \eqref{eq:advec_ener2}. These intermediate nodes adhere to the compartmental dynamical structure as described in \eqref{eq:dynamics}. Hence, the full model consists of states representing the node-based approximation of pipeline dynamics connected to each other through junction nodes that enforce conservation and mixing constraints introduced in \eqref{eq:mixing} and \eqref{eq:exit_temp}.
More precisely, we consider the original graph $\mathcal{G}^+ = (\mathcal{N}, \mathcal{E}^+)$ with adjacency matrix $D(\mathcal{G}^+) = D^+$. We introduce a vector $l_x \in \mathbb{R}^{\mep}$ that contains the number of additional nodes being inserted on each edge. The new graph is then augmented on the original graph, i.e., we introduce $\mc = \sum_{i=1}^{\mep} l_{x,i}$ new nodes and remove all original edges from $\mathcal{G}^+$.
The new connections are added to the adjacency matrix $\tilde{D}$ with dimension $\mj + \mc$ as follows:
\cmt{\begin{enumerate}
    \item We consider all original nodes to be labeled by their order in the adjacency matrix, i.e., $ 1, 2, \dots, {\mj}$, and all newly added nodes labelled as $ \mj+1, \mj+2, \dots, \mj + \mc$.
    \item For all $e=(i, j) \in \mathcal{E}^+$, the newly added path from $i$ to $j$ is $p_{ij} = (i, \mj+k+1, \dots, \mj+k+l_{x,e}, j)$, where $k = \sum_{i = 1}^{e-1} l_{x,i}$.
    \item If the path has fixed direction such that $(j,i) \notin \mathcal{E}^+$, we define
    \begin{equation} \label{eq:directed_edge_oversample}
        \tilde{D}_{p_{ij}, p_{ij}} = \begin{bmatrix}
            0 & 1 & 0 & \cdots & 0\\
            0 & 0 & 1 & & 0\\
            \vdots & \vdots &\ddots & \ddots& \vdots\\
             0 & 0 & \cdots & 0 & 1\\
            0 & 0 & \cdots & 0 & 0
        \end{bmatrix},
    \end{equation}
    otherwise, if $(j,i) \in \mathcal{E}^+$, we define
        \begin{equation} \label{eq:undirected_edge_oversample}
        \tilde{D}_{p_{ij}, p_{ij}} = \begin{bmatrix}
            0 & 1 & 0 & \cdots & 0\\
            1 & 0 & 1 & & 0\\
            \vdots & \ddots &\ddots & \ddots& \vdots\\
             0 & \vdots & \ddots & 0 & 1\\
            0 & 0 & \cdots & 1 & 0
        \end{bmatrix}.
    \end{equation}
\end{enumerate}
 The new graph for the case of the AROMA DHN is partially illustrated in Figure~\ref{fig:oversampling}.}

% \begin{figure}[thbp]
%     \centering
%     %\includegraphics[width = .42\linewidth]{Figures/Oversampling.png}
%     \includegraphics[width = .74\linewidth,trim={2cm 2.5cm 2cm 2.5cm}, clip]{Figures/Fig5.pdf}
%     \caption{The graph partitioning and addition of thermal nodes between junctions using part of the AROMA network. Note that small circles represent junction nodes and larger circles are thermal nodes.}
%     \label{fig:oversampling}
% \end{figure}

\begin{remark} \normalfont
    Note that $\Tilde{D}_{p_{ij},p_{ij}} \in \mathbb{R}^{p_{ij}\times p_{ij}}$ samples the rows and columns of $\Tilde{D}$ in order of the elements of $p_{ij}$, meaning that blocks \eqref{eq:directed_edge_oversample} and \eqref{eq:undirected_edge_oversample} do not show up in $\Tilde{D}$ as in the introduced form. Also, for any elements in $l_x$ equal to zero, we  retrieve $p_{ij} = (i,j)$ in step 2.
    \hfill$\diamond$
\end{remark}

\begin{remark} \normalfont
    \cmt{The extension of the graph does not change the independent flow distribution in the network presented in Section~\ref{sec:hydraulics}. Without loss of generality and for the ease of notation, we use $q$ to denote the extended mass flow variable. Nonetheless, considering the introduced graph augmentation, the precise definition of the flow variable in the new setting is $q = \big[q_i \otimes \mathbf{1}_{p_i}\big]_{i \in \mathcal{E}^+ }$.
    \hfill$\diamond$}
    %Nonetheless, for the ease of notation, we employ $q$ to denote the mentioned flow vector as well. 
\end{remark}

\subsubsection{From graph to state-space}
Let the graph resulting from $\Tilde{D}$ be denoted by $\Tilde{\mathcal{G}} = (\Tilde{\mathcal{N}}, \tilde{\mathcal{E}})$ with $|\tilde{\mathcal{N}}| = \mj + \mc$ number of state variables, where $\mj$ and $\mc$ denote the number of junctions and finite volume cells, respectively. Furthermore, let $\tilde{E}$ be the incidence matrix for the graph $\tilde{\mathcal{G}}$.

In terms of the interconnection between nodes, our aim is to capture the rate of change in energy flowing between the nodes. To this end, we introduce an intermediate edge variable, $ \varphi_e(q_e,x_e) = q_e x_e$, which indicates precisely the mentioned rate. However, since temperatures are inherently node-specific quantities, the temperature of the root node associated with each edge $e \in \mathcal{E}^+$ is required. Consequently, for the entire network, the rate of energy change across all edges is defined by
\begin{equation} \label{eq:energies}
    \varphi = \frac{1}{2}Q\left(|\tilde{E}|-\tilde{E}\right)^\top x,
\end{equation}
where \cm{$|\tilde{E}|$ denotes the element-wise absolute value operator of $\tilde{E}$, i.e., $|\tilde{E}|= [|\tilde{E}_{ik}|]$}, $x$ is the vector of state variables defined as $x = [x_{i}]_{i\in\tilde{\mathcal{N}}}$, and $Q$ is a diagonal matrix defined as $Q = \mathrm{diag}(q_i)_{i\in\tilde{\mathcal{E}}}$. Subsequently, by multiplying $\varphi$ with the full incidence matrix, we obtain the energy balance on each node. More precisely, we have
\begin{equation} \label{eq:node_energies}
    \phi = \tilde{E}\varphi,
\end{equation}
where $\phi \in \mathbb{R}^{|\tilde{\mathcal{N}}|}$ represents the nodal rate of change in energy due to in-flows and out-flows. By including external effects from heat exchangers and environment, we can introduce the complete state-space description through the thermal dynamics of the network.
To this end, given flow vector $q$, we define matrix $A(q)$ as
\begin{equation} \label{eq:Amatrix}
    \begin{aligned}
        A(q) = \frac{1}{2}\tilde{E} Q \left(|\tilde{E}|-\tilde{E}\right)^\top-D_\alpha,
    \end{aligned}
\end{equation}
where $D_\alpha$ is the diagonal matrix of heat loss coefficients characterized as $D_\alpha= \mathrm{diag}(\alpha_i)_{i\in\tilde{\mathcal{N}}}$.
Also, let $\mathcal{W}$ denote the set of nodes corresponding to heat exchangers in the network, i.e., $\mathcal{W} \coloneqq \{i \in \tilde{\mathcal{N}} : w_i \neq 0\}$.
% Furthermore, we define an $|\tilde{\mathcal{N}}|$ by $|\mathcal{W}|$ matrix, denoted by $B$, with entry at the $i^{\text{\tiny{th}}}$ row and the $j^{\text{\tiny{th}}}$ column defined as
Furthermore, we define an $|\tilde{\mathcal{N}}| \times |\mathcal{W}|$ matrix, denoted by $B$, with the entry in the $i^{\text{th}}$ row and $j^{\text{th}}$ column given by
\begin{equation}
    B_{ij} = \begin{cases}
        \dfrac{1}{\rho c_p}, &\quad \text{if }  j = 1,\ldots, |\mathcal{W}|  \text{ and } i = i_j,\\
        0, &\quad \text{otherwise}, 
    \end{cases}
\end{equation}
where $i_1,i_2,\ldots,i_{|\mathcal{W}|}$ are the elements of $\mathcal{W}$ sorted in an increasing order, i.e.,
$i_1<i_2<\cdots<i_{|\mathcal{W}|}$.
%Furthermore, we define an \(|\tilde{\mathcal{N}}| \times |\mathcal{W}|\) matrix \(B\), wherein the entries corresponding to the columns indexed by elements of \(\mathcal{W}\) are assigned the value \(1/(\rho c_p)\), while all other entries are zero.
% Furthermore, we define an $|\tilde{\mathcal{N}}| \times |\mathcal{W}|$ matrix, denoted by $B$, where the entries in the columns corresponding to the elements of $\mathcal{W}$ are $1/(\rho c_p)$, and the remaining entries are zero.
Accordingly, the thermal dynamics of the network can be described by
\begin{equation} \tag{P$_\mathrm{CT}$}\label{eq:CTN}
\begin{aligned}
    V\dot{x} &= A(q)x + B w,\\
    z &= Cx,
\end{aligned}
\end{equation}
where $V = \mathrm{diag}(V_i)_{i\in\tilde{\mathcal{N}}}$ is the volume matrix, and  $w = [w_{i}]_{i\in\mathcal{W}}$ is the vector of
external inputs and disturbances. \cm{Note that $\mathcal{W}$ can be partitioned into disjoint subsets $\mathcal{W}_C$ and $\mathcal{W}_P$ to distinguish between consumer and producer interactions, respectively.}
In Example \ref{ex:thermal_dynamics}, we discuss how to obtain $A$ for a simple case.

\begin{remark} \normalfont
    % Volume matrix $V$ is singular since $V_{i} = 0$ for $i \in \mathcal{N}$. One may suggest utilizing a Schur decomposition to discard the algebraic equations that describe the junctions of the network. However, this requires an inversion of a sub-matrix of $A(q)$, and despite being a diagonal and invertible matrix, it depends on the control variable $q$, which results in a set of nonlinear terms complicating the optimization problem substantially. 
    \cmt{The volume matrix $V$ is singular since $V_i = 0$ for $i \in \mathcal{N}$. One might suggest using a Schur decomposition to eliminate the algebraic equations associated with the network junctions. However, this approach requires inverting a sub-matrix of $A(q)$, and though $A(q)$ is diagonal and invertible, it depends on the control variable $q$, introducing nonlinear terms that significantly complicate the optimization problem.}
    \hfill$\diamond$
\end{remark}

\begin{example} \cmt{\normalfont\label{ex:thermal_dynamics}
Consider the graph in Figure~\ref{fig:simple_graph_example} with incidence matrix 
\begin{equation}
E = \begin{bmatrix}
    -1& 1& -1\\1 & -1 & 0\\0 & 0& 1
\end{bmatrix}.
\end{equation}
The component $\frac{1}{2}(|E| - E)$ takes all the leaving edges, namely the negative terms of $E$, and makes them positive. Then, through \eqref{eq:energies}, we obtain the rate of energy change for all the edges 
\begin{equation}
\varphi = Q \begin{bmatrix}
    1 & 0 & 0\\0& 1& 0\\1&0&0
\end{bmatrix}x.
\end{equation}
Subsequently, using \eqref{eq:node_energies}, we obtain the nodal rate of change in energy as
\begin{align}
    \phi = E\varphi =  { \begin{bmatrix}
    - q_{12} - q_{13}  & q_{21}&  0\\ q_{12} & -q_{21} & 0\\ q_{13} &0 &0
\end{bmatrix}}x. 
\end{align}
Finally, we obtain $A$ by subtracting the heat loss matrix $D_\alpha$ from $\phi$ as in \eqref{eq:Amatrix}.

\begin{figure}[thbp]
    \centering
    \includegraphics[width=.3\linewidth]{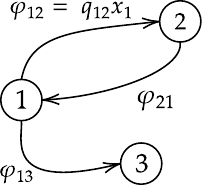}
    \caption{The energy transfer rate $\varphi$ in Example~\ref{ex:thermal_dynamics} illustrated as a function of edge flow rate and node temperature on a graph.}
    \label{fig:simple_graph_example}
\end{figure}}
\end{example}

\subsection{Time Discretization}
A discrete-time model is required for the MPC problem. Due to the slow thermal dynamics of DHNs, online computation constraints, and demand measurement intervals, practical implementations typically use time steps $\tau_t$ ranging from 15 minutes to 1 hour \cite{Krug2020}. For explicit discretization schemes, satisfying the Courant-Friedrichs-Lewy~(CFL) condition, i.e., $q_i(k){\tau_t} \leq V_i \; \forall i, \; k \in \mathbb{N}$, is difficult when modeling with large time steps. Hence, we employ the numerically stable implicit Euler method:
\begin{equation}
    x(k+1) = x(k) + \tau_t f_\mathrm{ct}(x(k+1),u(k+1)),
\end{equation}
where $f_\mathrm{ct}$ is the continuous-time dynamics. The application of the implicit Euler method to \eqref{eq:CTN} results in the following discrete-time system 
\begin{equation} \label{eq:DTS}
\begin{aligned}
    Vx(k+1) = Vx(k) +\tau_t \big[A(q(k))x(k+1) + Bw(k) \big],
\end{aligned}
\end{equation}
or, equivalently, we can split the algebraic part from the equation by introducing a subset of variables $x^{c}(k) \in \R^{\mc}$ to denote the state of thermal node cells with positive volume and {$\xj(k) \in \R^{\mj}$} to denote the state of junctions with zero volume. As a result, we can describe the system through a discrete-time differential algebraic equation~(DAE) as
\begin{equation} \tag{P$_\mathrm{DT}$} \label{eq:DT_dynamics}
\begin{aligned}
    \xc(k+1) &= f(\xc(k), \xj(k), u(k), d(k)),\\
    0 &= g(\xc(k),  \xj(k), u(k), d(k)).
\end{aligned}
\end{equation}
%with the state vector 
%$x(k) = \big[
%       \xc(k)^\top \  \xj(k)^\top
%\big]^\top \in \R^{|\tilde{\mathcal{N}}|}$, where the input vector 
%$u(k) = \big[q(k)^\top \  P(k)^\top \big]^\top \in \R^{m_\mathrm{u}}$, 
%$[\big[ \Big[ \bigg[ \Bigg[$
%disturbance $d(k) \in \R^{m_d}$, and, the functions 
%$f:\R^{|\tilde{\mathcal{N}}|}\times \R^{m_\mathrm{u}} \times \R^{m_d} \rightarrow \R^{\mc}$ and $g:\R^{|\tilde{\mathcal{N}}|}\times \R^{m_\mathrm{u}} \times \R^{m_d} \rightarrow \R^{\mj}$ describing the dynamics and the algebraic constraints on the state, input and disturbance variables.
\cm{The system's dynamics are governed by the functions $f$ and $g$, which relate the state vector $x(k)= \big[
       \xc(k)^\top \  \xj(k)^\top
\big]^\top $, input vector $u(k) = \big[ q_r(k)^\top \  P(k)^\top \big]^\top$, and disturbance $d(k)$. A description of these variables is detailed in Table \ref{tab:system_variables}. }
Note that, in \eqref{eq:DT_dynamics}, the previously mentioned external input $w(k)$ has been split into a controllable power injection part $P(k)$ considered in $u(k)$ and a disturbance part considered in $d(k)$, which represents consumer demand. Additionally, $q_r(k)$ is the set of reduced flow variables from \eqref{eq:id_flows}.

\begin{table}[thpb]
    \centering
    \caption{Description of system variables.}
    \label{tab:system_variables}
    \vspace{-3mm}
    %\begin{center}
    \ra{1.4}
    \begin{tabular}{cclcc}
    \toprule
        Symbol & \ \  & Description &  \ \ & Dimension\\
        \hline 
        $\xc(k)$&& State: Thermal node temperature   && $\mc$\\
        $\xj(k)$ && State: Junction temperature && $\mj$\\
        $q_r(k)$&& Control input: Volume flow rate && $\mr$\\
        $P(k)$ &&Control input: Power injection && $|\mathcal{W}_P|$\\
        $d(k)$&& Disturbance: Consumer demand  && $|\mathcal{W}_C|$\\
        $x(k)$&& Combined state vector  && $|\tilde{\mathcal{N}}|$\\
        $u(k)$ && Combined input vector && $\mr+|\mathcal{W}_P|$\\
        \bottomrule
    \end{tabular}
    %\end{center}
\end{table}

\section{Economic MPC Formulation for District Heating Networks} \label{sec:EMPC}
%In this section, we formulate the optimization problem, discuss relevant design choices for the objective function, and mention various computational techniques to improve the numerical performance of nonlinear solvers.
In this section, we introduce our economic MPC scheme for optimizing the performance of DHNs. To this end, we formulate the MPC optimization problem, discuss relevant design choices for the objective function, and present various computational techniques to improve the numerical performance of the employed solvers.

\subsection{Problem Formulation}
Consider the discrete-time dynamics 
\eqref{eq:DT_dynamics}. The receding horizon optimal control problem at time $k$ is defined as 
\begin{equation} \tag{OCP} \label{eq:OptimalControl}
    \begin{aligned}
        \min_{(x_t)_{t=1}^N, (u_t)_{t=0}^{N-1}} & \quad J_{N}\big((x_t)_{t=1}^N, (u_t)_{t=0}^{N-1}\big)\\
        \text{s.t.}\ \!
        & \quad  \xc_{t+1} = f(\xc_t, \xj_t, u_t, d_{t+k}),\\
        &\quad 0 = g(\xc_t, \xj_t, u_t, d_{t+k}),\\
        &\quad x_t \in \mathbb{X}_{t+k},   
        \\&
        \quad  u_t \in \mathbb{U}_{t+k},\\
        & \quad \forall t \in \{0,\dots, N-1\},\\
        & \quad x_0 = x(k),
    \end{aligned}
\end{equation}
% \begin{equation} \tag{OCP} \label{eq:OptimalControl}
%     \begin{aligned}
%         \min_{\stackrel{x_1,\dots,x_N}{ u_0,\dots, u_{N-1}}} & \quad J_{N}(x_1, \dots, x_N, u_0, \dots, u_{N-1})\\
%         \text{s.t.}\ \!
%         & \quad  \xc_{t+1} = f(\xc_t, \xj_t, u_t, d_{t+k}),\\
%         &\quad 0 = g(\xc_t, \xj_t, u_t, d_{t+k}), \\&\quad \forall t \in \{0,\dots, N-1\},\\
%         &\quad x_t \in \mathbb{X}_{t+k},   
%         \quad \forall t \in \{0,\dots, N-1\}, \\&
%         \quad  u_t \in \mathbb{U}_{t+k}, \quad \forall t \in \{0,\dots, N-1\},\\
%         & \quad x_0 = x(k),
%     \end{aligned}
% \end{equation}
\cm{where $x(k)$ is the current state at time $k$, the sets $\mathbb{X}_t \coloneqq \{x : \mathcal{F}_t(x) \leq 0\}$ are time-varying constraint sets representing operational bounds on the state, and the sets $\mathbb{U}_t \coloneqq \{u : \mathcal{H}_t(u) \leq 0\}$ represent time-varying operational constraints on the inputs, including the hydraulic constraints as defined in equations \eqref{eq:complementary_con}, \eqref{eq:id_flows}, \eqref{eq:matKhoff} from Section~\ref{sec:hydraulics}.}
Let the solution of the optimal control problem for horizon $N$ be denoted by $(x_{N,k}^*, u_{N,k}^*)$, where the subscripts are included 
to stress the dependence of optimal solutions on $N$ and $k$. 
In MPC, the optimal control problem in \eqref{eq:OptimalControl} is solved iteratively. In each step, the feedback control law is
\begin{equation*}
    \mu_{N}(x(k)) = u_{0}^*,
\end{equation*}
where $u_0^*$ is the first element of the optimal sequence $u_{N,k}^*$.

\begin{remark} \normalfont
    \cm{The optimal control problem yields an optimal volume flow rate sequence, $q(k)$. However, direct actuation of valves by the high-level MPC is impractical. Therefore, $q(k)$ serves as a setpoint for a lower-level valve controller.
     \hfill$\diamond$}
\end{remark}
%
%One should note that we have included the hydraulic constraints in the MPC formulation of \eqref{eq:OptimalControl} through . 

\subsection{Objective Function} \label{ssec:objective_fun}
The objective function is a critical component in the design of economic MPC schemes, 
allowing for the selection of a cost function that accurately represents our practical goals or specifications.
%where we have the freedom to choose a cost function that reflects our practical objectives or specifications. 
% Considering the current context, i.e., the objective function should include various aspects of optimizing the performance of the DHN. Accordingly, we define as
%Considering the current context, that is optimizing the performance of the DHN, we define the objective function as
Given the current objective of optimizing the performance of the DHN, we define the objective function as
\begin{equation}
    J_N = J_N^{\text{price}} + J_N^{\text{temp}} + J_N^{\text{diff}} + J_N^{\text{sto}}+ J_N^{\text{slack}},
\end{equation}
where each of these terms reflects a desirable operational feature or aspect, which are discussed below.
%Let us discuss each of the terms individually:
\begin{itemize}
    \item \textit{Price term}. Operational management requires minimization of operational costs. Additionally, we assume part of the generating mechanism is linked to a market, such as a heat pump purchasing from the electricity grid. Hence, we use a linear cost function
    \begin{equation} \label{eq:cost_price}
        J_N^{\text{price}} = \sum_{t=0}^{N-1} R_t^{\text{price}} \cm{P_t},
    \end{equation}
    where $R_t^{\text{price}}$ represents the time-varying price, or relative price, of generating $P_t$.
    \item \textit{Temperature term}. It is desired to operate DHNs at low temperatures to improve their efficiency. Therefore, we include a state term in the objective as
    \begin{equation}
        J_N^{\text{temp}} = \sum_{t=1}^{N} R_t^{\text{temp}} x_t^p 
    \end{equation}
    with the penalty coefficient $R_t^{\text{temp}}$, and power index $p \in \{1,2\}$, suggesting that the function can be linear or quadratic in $x$.
    \item \textit{Input variation term}. 
    %Typically, it is undesirable to have fast switching in supply temperatures in DHNs. This is mainly due to the pipeline deterioration from thermal stress induced by the switching \cite{Zwan2020}. 
    It is undesirable to have fast switching in supply temperatures in DHNs, primarily due to the pipeline deterioration from the resulting thermal stress \cite{Zwan2020}. 
    Therefore, we consider a cost term as
    \begin{equation} \label{eq:input_var}
        J_N^{\text{diff}} = \sum_{t=0}^{N-2} R_t^{\text{diff}}(\cm{P_{t+1}- P_t})^2,
    \end{equation}
    which penalizes input deviations between any successive timesteps.
    \item \textit{Storage term}. \cmt{Minimization of operating costs and temperature usually does not favor charging of a storage buffer. Therefore, without any storage term, the MPC typically stays on discharging mode for all of the storage units. To address this issue, we introduce a terminal tracking cost on the temperature of storage nodes as
    \begin{equation}
        J_N^\text{sto} = \|x_N^{\text{sto}} - \overline{x}^{\text{sto}}\|_{R^\text{sto}}^2, 
    \end{equation}
    which is similar to the one employed in \cite{Labella2023}. This term will encourage the MPC to charge situationally depending on the size of $R^\text{sto}$.}
    \item \textit{Slack term}. The optimization is a large-scale nonlinear program, where the dynamics and constraints change in each iteration due to time-varying elements. Accordingly, guaranteeing feasibility in every iteration is not always possible.
    %it is very hard to guarantee feasibility in every iteration of the algorithm.
    On the other hand, certain constraints, e.g., temperature bounds and demand satisfaction, may not be \emph{hard} constraints, meaning that a certain degree of violation is allowed. For this, we introduce variables $\sigma$ and a cost term as
    \begin{equation}
        J_N^\text{slack} = \|\sigma\|_{R^{\text{slack}}}^2,
    \end{equation}
    with $R^{\text{slack}}$ chosen to be large enough penalizing undesired constraint violations.
\end{itemize}

%\begin{remark} \normalfont
    %Generally, to ensure closed-loop stability of MPC, a terminal cost and constraint are required. These conditions ensure that the closed-loop converges to the infinite-horizon optimal trajectory. However, for nonlinear systems, it is not straightforward to compute these terminal ingredients, and in EMPC we typically do not have any knowledge of optimal trajectories. Nonetheless, we have performed extensive analysis into the behaviour of the open-loop trajectories of the MPC controller in our previous work~\cite{Sibeijn2024}, and, as a result, conjecture that these trajectories satisfy a turnpike property, which, in turn, guarantees that our closed-loop MPC solution converges to a neighborhood of the infinite-horizon optimal trajectory. For details on the theory of turnpikes in optimal control we refer the reader to \cite{c3}, \cite{c4}, or \cite{c5}.
%\end{remark}

\subsection{Improving Numerical Performance}
The design of MPC algorithms, which inherently involve the iterative solution to an optimal control problem, depend on the convergence speed of the numerical solvers 
%that are used to solve the problem. 
used to obtain the solution.
To this end, we implement various methods to improve the computational performance, as discussed below.

\begin{itemize}
    \item \textit{Warm starting}. This technique can improve computational speed by providing the numerical solver with a near-optimal initial guess. After a single iteration, we possess the $N$-step prediction of the state $x$. Thus, in the next iterate, we initialize the solver with 
    \mbox{$x_0^{k+1}, x_1^{k+1}, \dots x_{N-1}^{k+1} \leftarrow x_0^{k*}, x_1^{k*}, \dots x_{N-1}^{k*}$}, where $k$ denotes the index for current iteration. This process is repeated for all subsequent iterations.
    \item \textit{Objective terms}. Certain objective terms can assist the numerical solver in finding a solution. In particular, we noticed that the input variation term \eqref{eq:input_var} significantly improves the speed of convergence.
    \item \cmt{\textit{Control horizon}. Supply temperature changes can take a while to arrive at distant consumers. 
    Therefore, the prediction horizon should be sufficiently long to ensure that the effects of these changes can reach all consumers in the allocated horizon. 
    %Therefore, the prediction horizon should be long enough to ensure that the effects of these changes can reach all consumers within the specified time frame.
    In order to limit the computational complexity, we can reduce the degrees of freedom in the optimization problem by setting a control horizon $N_c \leq N$ such that any control input after $N_c$ is equal to $u_{N_c}$, i.e, $u_{N_c + i} = u_{N_c}$ for $i = 1, \dots, N-N_c$.}
    \item \cmt{\textit{Move blocking}. Similarly to the control horizon, move blocking is a technique to reduce the dimension of the optimization variable. More precisely, it works by fixing the control variable over multiple time steps, which is not necessarily at the end of the horizon. For instance, here we will use 
    % $u_{i+1} = u_i$ for all $i = t, t + 1, t + 2$, and 
    $u_t = u_{t+1} = u_{t+2} = u_{t+3}$, %for all
    %$t = 4k$, where $t\le N-4$ and $k=0, 1, 2, \dots, \left[\sfrac{(N)}4 - 1\right]$, 
    % where $t$ is less than $N-3$ and a factor of $4$,
    %for all $t$ less than $N-3$, which is a factor of $4$,
    % for all $t$ less than $N-3$ and a factor of $4$,
    for all $t$ that are multiples of 4 and less than $N-3$,
    %where $t$ a factor of $4$ in the MPC horizon range,
    %$t$, which is a factor of $4$ and in the MPC horizon range,
    which means that control inputs are fixed for four time steps in a row. }
\end{itemize}
\section{Numerical Experiments and Results} \label{sec:results}
In this section, to assess and verify the economic and computational performance of our proposed methods, we perform suitably designed numerical experiments and simulation studies. 
To this end,
%In what follows, 
we compare the proposed method to existing control strategies in the literature, including single-producer MPCs (SP-MPC) algorithms, which are based on, or similar to, optimization-based controllers used in \cite{Krug2020} and \cite{Labella2023}. 
%In the subsequent sections, we compare the proposed method to existing control strategies documented in the literature, including single-producer MPC (SP-MPC) algorithms. These algorithms are based on, or similar to, the optimization-based controllers utilized in \cite{Krug2020} and \cite{Labella2023}.
Additionally, we compare with a rule-based control (RBC) scheme implementation, 
%which is closely related to the widely adopted approach in practice for the control of DHNs operation. 
which is close to the widely adopted approach in practice for the control of DHNs.
%which closely aligns with the widely adopted approach for operating DHNs in practice.
% The single-producer model with storage is denoted here by SPS-MPC, the multi-producer without storage denoted by MP-MPC, and the multi-producer with storage model by MPS-MPC. 
% % Moved to subsection C
% We denote the single-producer model with storage by SPS-MPC, the multi-producer without storage by MP-MPC, and the multi-producer with storage model by MPS-MPC. 
% %
Furthermore, we perform a numerical study to evaluate the computational tractability of the proposed methods. In particular, we assess the impacts of spatial oversampling and changing the prediction horizon on the computational load and performance of the algorithms. 
The district heating network considered in our numerical experiments is the AROMA network, introduced in Section~\ref{ssec:AROMA} and illustrated in Figure~\ref{fig:aroma_sims}.

For our numerical experiments, we employ a standard laptop with an \textsl{Intel i7-1185G7} processor to
run the simulations and Julia to create the models.
% All simulations are run on a standard laptop with an \textsl{Intel i7-1185G7} processor. The models are created in Julia. 
We use the mathematical programming package \textsl{JuMP.jl} \cite{Lubin2023} to build the optimization problems,  \textsl{Ipopt} \cite{Wachter2006} to solve the nonlinear optimization problems, and \textsl{DifferentialEquations.jl} \cite{Rackauckas2017} to simulate the DHN between iterations. 

\begin{figure}[thbp]
    \centering
    \includegraphics[width = 0.8\linewidth,trim={4cm 6.5cm 4cm 6.5cm}, clip]{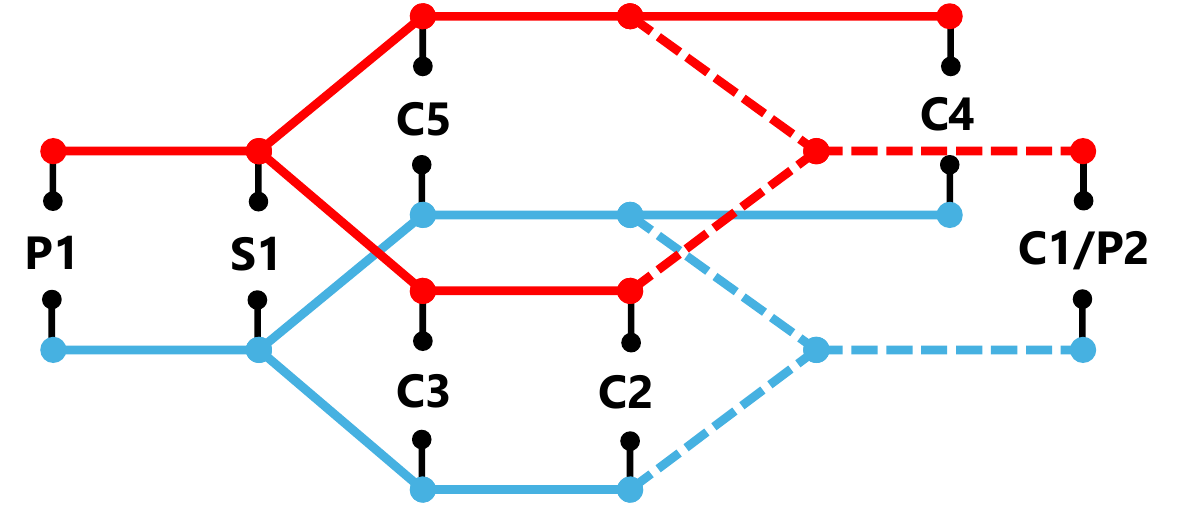}
    \caption{The AROMA network labeled with five consumers (labeled {C}), two producers (labeled P), and a storage (labeled S). Dashed lines indicate pipes on which bidirectional flows are allowed.}
    \label{fig:aroma_sims}
\end{figure}

Before proceeding with the results, we introduce the ground truth reference model to which we apply the generated inputs, and, secondly, we discuss the physical parameters, load profiles, electricity price profiles that were used in all simulations.

\subsection{High-fidelity Model}
To assess the performance of the controller, we will apply the inputs $\mu_N(x(k))$ to a high-resolution simulator that accurately describes the system. This simulator model, called here the \emph{high-fidelity model}, acts as a representation of the real system. In \cite{Simonsson2024}, a detailed study on the accuracy of these graph-theoretic simulation models for district heating systems is provided along with a comparison to the other high-fidelity simulators that have been verified using real measurements. It is shown in \cite{Simonsson2024} that, even for reduced-order models, their method exhibits high accuracy.

We refer to the high-fidelity model by $\mathrm{P_{CT}^\mathrm{HF}}$, which is a continuous-time system obtained using the same method as in \eqref{eq:CTN}. Nonetheless, the dimension of $x^{\mathrm{HF}}$, the state vector in the high-fidelity model, is equal to $\mj + \beta \mc$, where $\beta \in \mathbb{N}$, and $\tau_{x_i}^{\mathrm{HF}} = \tau_{x_i}/\beta$ to compensate for pipe length. This change suggests that for large values of $\beta$, we achieve a much higher spatial resolution, which leads to a better approximation of the original system. Finally, the resulting system $\mathrm{P_{CT}^\mathrm{HF}}$ is a system of differential algebraic equations that we solve using dedicated solvers in Julia \cite{Rackauckas2017}. In each iteration, the solver computes the evolution of the states for $\tau_t$ seconds. Every iteration is initialized using the final step of the previous simulation $x^\text{HF}(k\tau_t-\tau_t)$ and the optimal inputs $\mu_N(x(k))$ remain constant for the duration of the simulation, i.e., on the time interval $[k\tau_t - \tau_t, k\tau_t]$. After completion, the current state $x(k) = x^\text{HF}(k\tau_t)$ is fed into the MPC controller. See Figure~\ref{fig:block_scheme} for a diagram of this feedback loop.

\begin{figure}[thbp]%[thpb]
    \centering
    \includegraphics[width=.75\linewidth, trim={3cm 2.5cm 3cm 2.5cm}, clip]{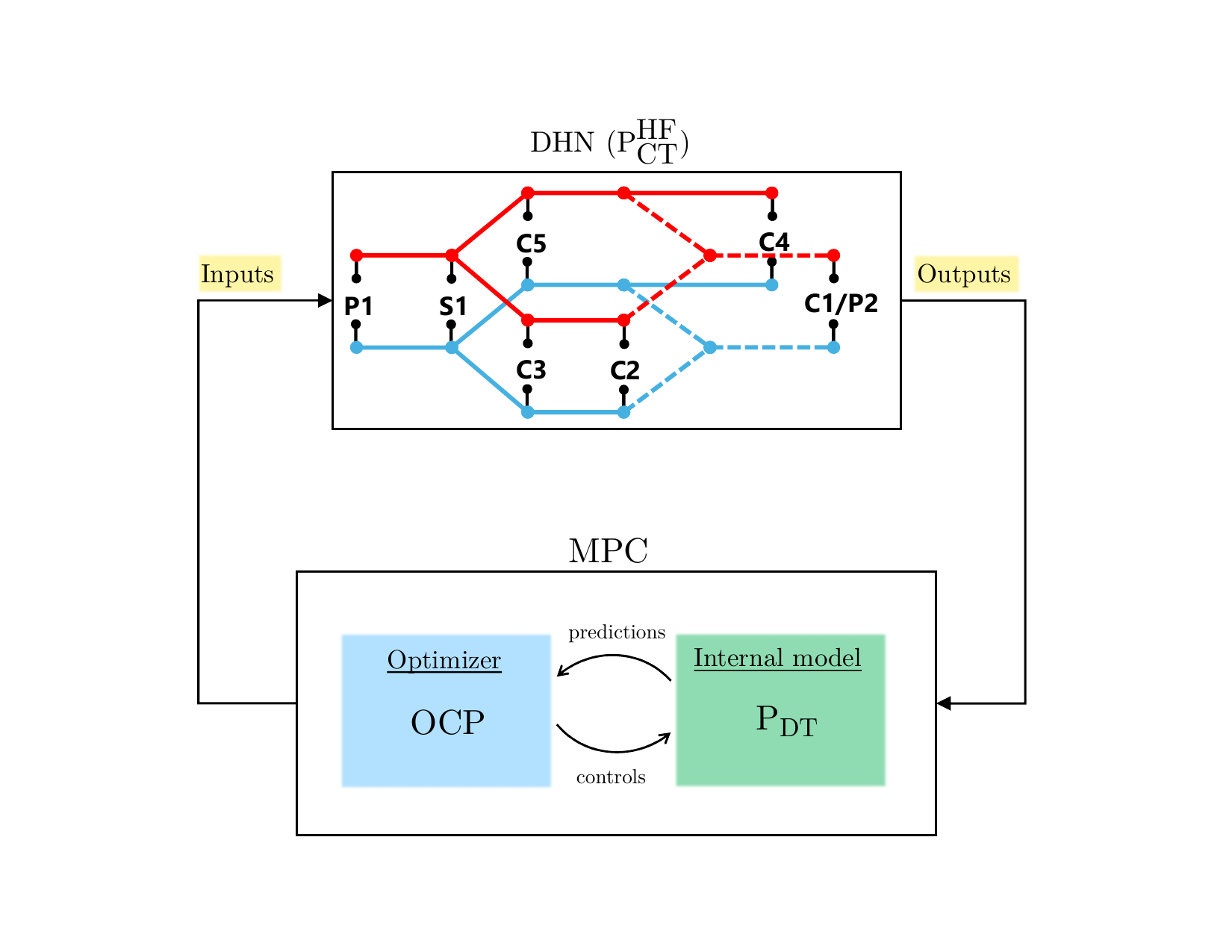}
    \caption{The block diagram showing the MPC feedback scheme.}
    \label{fig:block_scheme}
\end{figure}

\subsection{Parameters and Data}
In Table~\ref{tab:physical_parameters}, we list the physical parameters used in our simulations. System dimensions such as pipe diameters and lengths are the same as in \cite{Krug2020}. We obtain the heat transmission coefficient $U_{\text{pipe}}$ from \cite[p. 77]{Werner2013}, and the friction coefficient $K_\text{pipe}$ from \cite[p. 444]{Werner2013}. 
\begin{table}[thpb]
    \centering
    \caption{List of physical system parameters}
    \label{tab:physical_parameters}
    \vspace{-3mm}
    %\begin{center}
    \ra{1.25}
    \begin{tabular}{lclcl}
    \toprule
        Parameter & \ \  & Value & \ \ & Units\\
        \hline 
        $T_a$&& 10 && $^\circ$C\\
        $\rho$ && 981 &&kg$\cdot$m$^\text{-3}$\\
        $c_p$&& 4182 &&J$\cdot$kg$^\text{-1} \cdot$K\\
        $U_\text{pipe}$ &&0.4 &&W$\cdot$m$^\text{-2} \cdot$ K$^\text{-1}$\\
        $K_\text{pipe}$&& 0.02 && -\\ 
        $d_\text{pipe}$&& 70 - 107 &&mm\\
        $d_\text{sto}$&& 2000 && mm\\
        $L_\text{pipe}$&& 300 - 600 &&m\\
        $L_\text{sto}$&& 8 && m\\
        \bottomrule
    \end{tabular}
    %\end{center}
\end{table}

The demand profile employed here is an approximation of the one used in \cite{Krug2020}. The electricity prices are acquired from Ember \cite{Ember2024}, where hourly electricity spot prices are provided for the Netherlands. \cm{We use price data from March 14, 2024 to determine the relative price term in \eqref{eq:cost_price}, i.e., $R_t^\text{price}$.}
Figure~\ref{fig:demand_price_plot} illustrates the demand and electricity prices for 24 hours, where only the net demand of consumers is shown. The demand of each individual consumer is computed as a fraction of the total demand as shown in Table~\ref{tab:dem_frac}, assuming the same load distribution as in \cite{Krug2020}.
\begin{table}[thpb]
    \ra{1.25}
    \caption{Fraction of total demand}
    \label{tab:dem_frac}
    \vspace{-2mm}
    \centering
    \begin{tabular}{c c c c c}
    \toprule
         C1/P2 & C2 & C3 & C4 & C5  \\
         \hline
         0.08&0.34&0.11&0.08&0.38 \\
    \bottomrule
    \end{tabular}
\end{table}

\begin{figure}[thpb]
    \centering
    \includegraphics[width=0.85\linewidth]{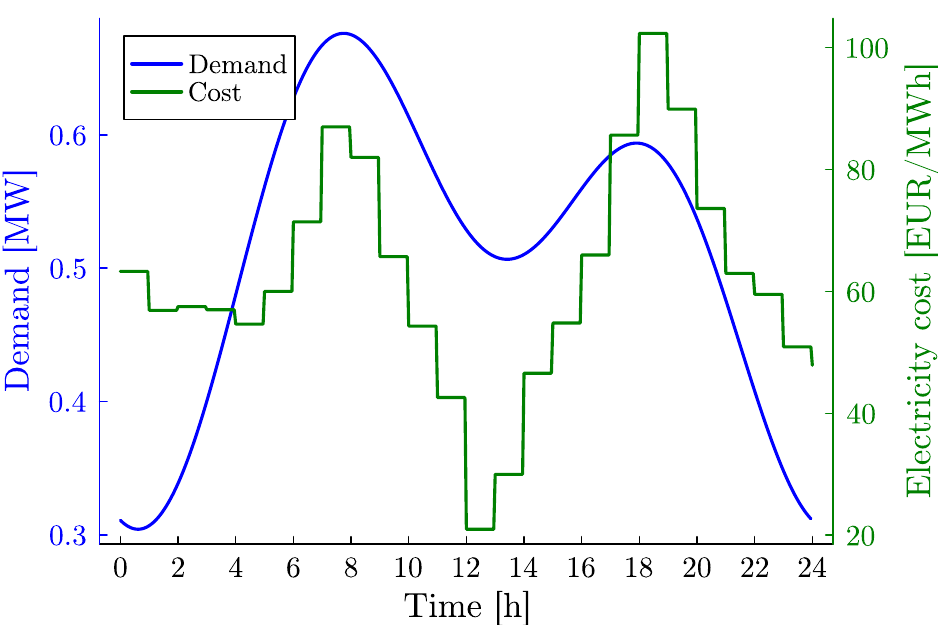}
    \caption{Total demand profile and electricity price profile for 24 hours. Electricity prices are from March 14, 2024, in the Netherlands.}
    \label{fig:demand_price_plot}
\end{figure}

\subsection{Economic Performance and Comparisons}
%We compare our method with various control strategies, including rule-based control, and different MPC schemes for single producer DHNs without storage. We denote the existing strategies for the rule-based control by RB, and for the single producer MPC by SP-MPC. Also, the proposed methods are denoted as the single-producer model with storage by SPS-MPC, the multi-producer without storage by MP-MPC, and the multi-producer with storage model by MPS-MPC.

We conduct a comparative analysis of our proposed method against several established control strategies, including rule-based control and MPC for single producer DHNs without storage capabilities. The existing rule-based control strategy is referred to as RBC, while the single producer MPC strategy is denoted as SP-MPC. Our proposed methods are referred to as follows: the single-producer model with storage is denoted as SPS-MPC, the multi-producer model without storage is referred to as MP-MPC, and the multi-producer model with storage is indicated as MPS-MPC.

\cm{To quantify the economic value of the proposed methods, we analyze the financial implications for network operators under representative operating conditions. Here, prosumer C1/P2 generates a heat surplus of 100 kW during the interval 12:00-17:00, resulting in negative net demand. To maintain a similar total demand for comparative analysis, we introduce a compensatory 80 kW load increase at consumer C4.} %However, in scenarios involving rule-based control and single-producer MPCs, the excess heat cannot be integrated back into the DHN, and therefore, the demand is considered to be zero during this period. Note that no costs are considered for compensating C1/P2 for the heat feedback into the network.
%\begin{figure}[thbp]
    % \centering
    % \includegraphics[width = 0.6\linewidth]{Figures/Dem1_v2.eps}
    % \caption{Demand profile for C1/P2.}
    % \label{fig:dem1}
%\end{figure}

% In the numerical experiments, we have set the time step to $\Delta t= 15$ min.
% %the prediction horizon to $N = 16$ and $N = 32$, and 
% %and the control horizon to $N_c = N$. 
% Here, the cost function $J_N$ reflects the economic cost, i.e., the relative price term $R_t^\text{price}$ is scaled to be significantly larger
% %by 1000 times
% than the remaining terms in the objective function, except for those related to the slack variables. The purely economic simulation results are shown in Figure~\ref{fig:bar_chart}.
\subsubsection{Cost comparison}
\cmt{In our numerical experiments, we set the control interval to $\Delta t = 15$ minutes. The cost function $J_N$ is formulated to prioritize economic performance, with the relative price term $R_t^\text{price}$ weighted substantially higher than other objective function terms, excluding slack variable penalties. A general overview of the economic performance results are presented in Figure~\ref{fig:bar_chart}.}

\begin{figure}[thpb]
    \centering
    \includegraphics[width = 0.87\linewidth]{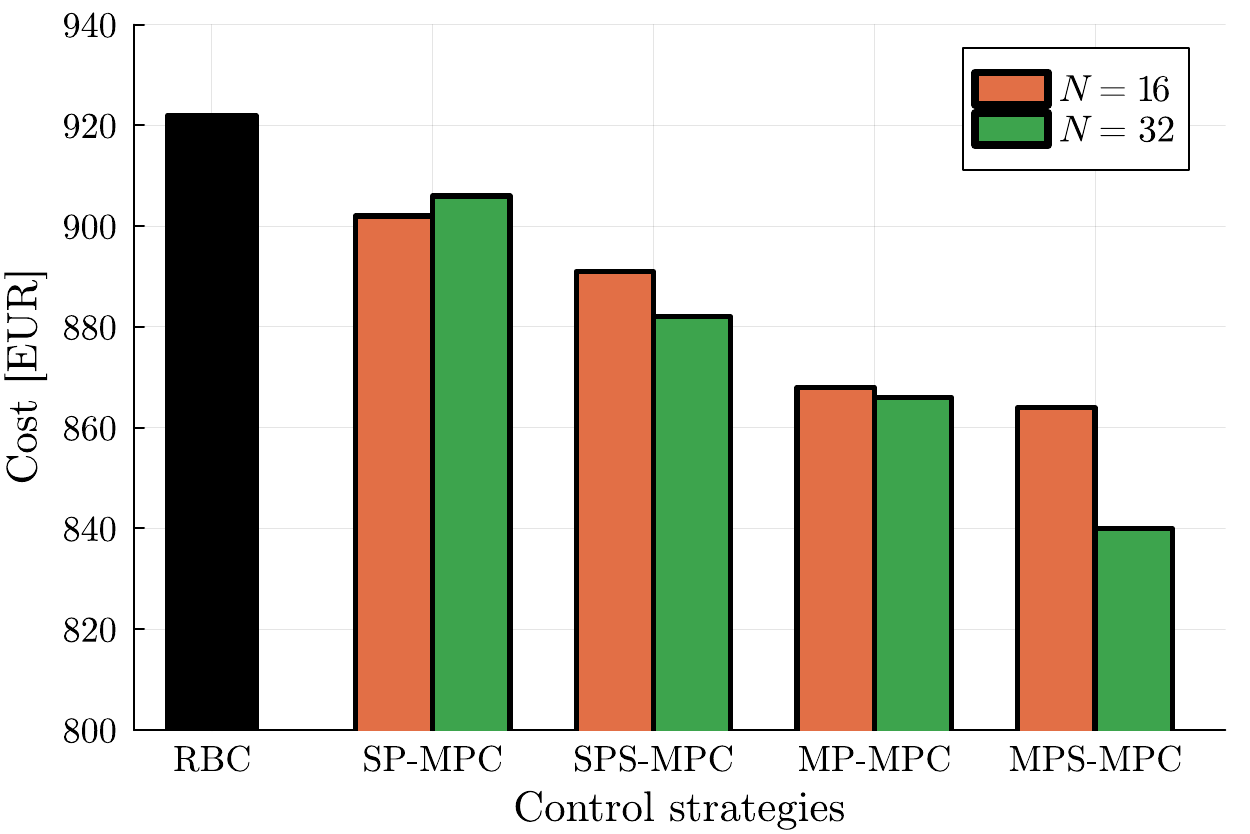}
    \caption{\cmt{Results of simulation 1. All results are based on 24 hours of simulation. Results from left to right are for: rule-based control (RBC), single-producer MPC (SP-MPC), single-producer MPC with storage (SPS-MPC), multi-producer MPC (MP-MPC), and multi-producer MPC with storage (MPS-MPC). The average runtime for each iteration of the MPC algorithms was under 10 seconds.}}
    \label{fig:bar_chart}
\end{figure}

\subsubsection{Quantifying the effect of slack variables}
\cm{To assess the true performance of the proposed methods, we quantify the extent to which the constraints have been violated as a consequence of the effects of the slack variables. 

In our analysis, we observe distinct patterns in constraint violations. Pumping capacity bottlenecks manifest primarily as demand violations, where the network cannot deliver sufficient flow rates to meet consumer requirements. Conversely, temperature violations - defined as failures to maintain minimum required supply temperatures at consumer substations - occur predominantly in scenarios where pumping capacity is sufficient to meet demand. While theoretical coupling between these constraint violations is possible, our simulations suggest they tend to be mutually exclusive, with temperature violations emerging only when demand is fully satisfied and demand violations occurring only under pumping capacity constraints.

Temperature requirements, typically mandated by regulatory frameworks, remain critical operational constraints that DHN operators must prioritize. For a simulation of $t_f$ time steps, we quantify these average temperature violations (ATV) in degrees Celsius as

\begin{equation}
        \text{ATV} = \frac{1}{t_f |\mathcal{W}_C|}\sum_{k = 1}^{t_f} \sum_{c \in \mathcal{W}_C} \text{max}(0, T_{\text{sup,min}} - T_{c}(k)). 
\end{equation}

On the other hand, network operators must ensure sufficient heat delivery to meet consumer demand requirements. We quantify demand violation (DV) over $t_f$ steps as

 \begin{equation*}
        \text{DV} = 100\% \times \frac{\sum_{k = 1}^{t_f} \sum_{j \in \mathcal{W}_C} d_j(k) - d_j^{\text{true}}(k)}{\sum_{k = 1}^{t_f} \sum_{j \in \mathcal{W}_C} d_j(k)},
    \end{equation*}
    where the true heat exchange is defined as
    \begin{equation*}
        d_j^{\text{true}}(k) = q_j(k) \left(T_{j-1}(k) - T_j(k)\right), \quad j \in \mathcal{W}_C.
    \end{equation*}
    In Figure \ref{fig:312}, we compare these violation metrics against the operational costs obtained in each method.
 }

\begin{figure*}[htbp]
    %\captionsetup{subtype}  % Reset subfigure settings
    \captionsetup[subfigure]{labelfont={normalfont,small},textfont={normalfont,footnotesize} }
    \subfloat[\normalfont \cm{Daily operating costs versus temperature violations for a DHN with sufficient pumping capacity.}]{
        \includegraphics[width=0.41\textwidth]{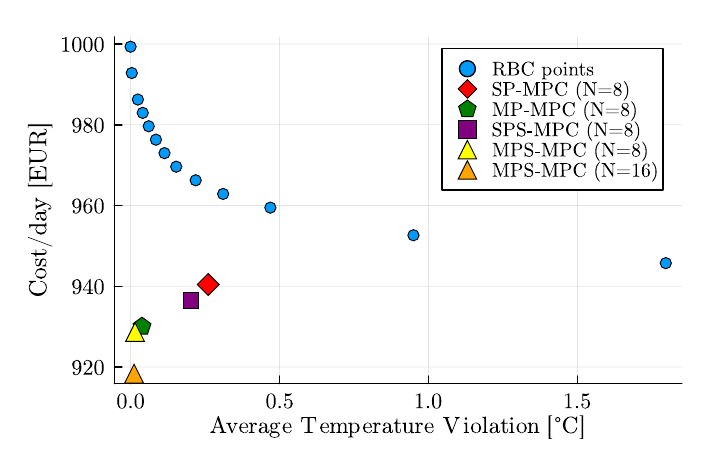}
        \label{fig:fig312a}
    }%   % Added % here to remove unwanted space
    \hspace{0.1\textwidth}%   % Explicit spacing
    \subfloat[\normalfont \cm{Daily operating costs versus demand violations for a DHN operating with pumping capacity constraints.}]{
        \includegraphics[width=0.41\textwidth]{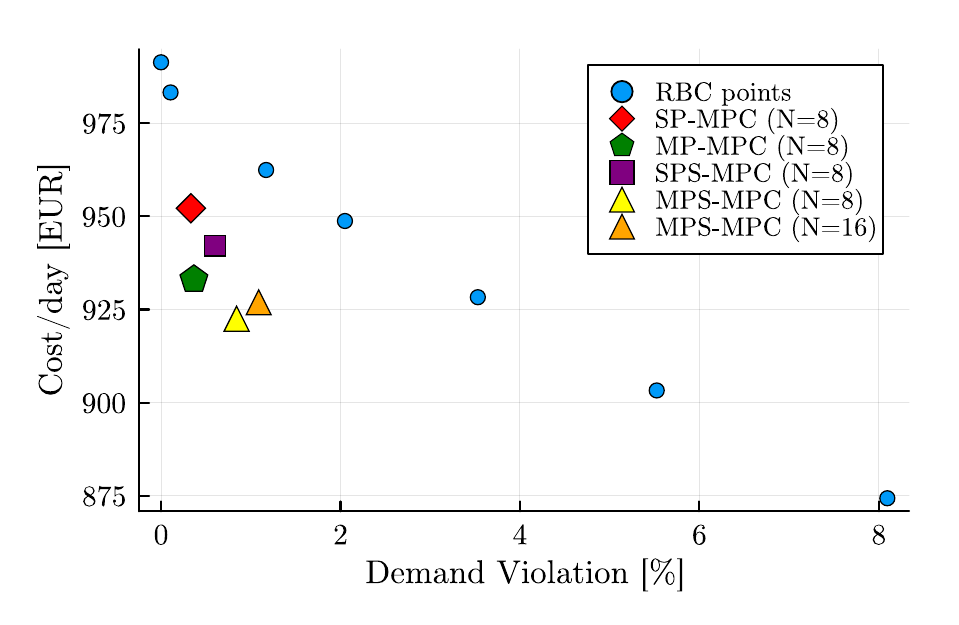}
        \label{fig:fig312b}
    }
    \caption{\cm{Relationship between operational costs and constraint violations under different pump configurations measured over a four day period.}}
    \label{fig:312}
\end{figure*}

\subsubsection*{Discussion} 

\cm{Several key observations emerge from Figure \ref{fig:bar_chart} and \ref{fig:312}. First, there is a noticeable improvement trend in cost reduction, with each feature added to the algorithm providing incremental benefits. The MPS-MPC algorithm with $N = 32$ achieves the highest performance, demonstrating an 9\% cost reduction compared to the rule-based controller.

Our analysis of constraint violations reveals that the MPC-based approaches actively optimize network operations based on predicted demand and price signals, resulting in dynamic temperature management rather than simple static setpoint tracking. This continuous optimization allows the controller to systematically reduce average network temperatures when beneficial while maintaining required service levels - a capability that rule-based controllers inherently lack. This dynamic temperature management is key to achieving both cost reductions and improved constraint satisfaction compared to rule-based approaches.

Additionally, when analyzing demand violations under pumping capacity constraints, the data suggest that prosumer-based control strategies, particularly MP-MPC and MPS-MPC, can better manage network limitations. The ability to actively coordinate multiple producers and redistribute flow patterns allows these approaches to excel at balancing competing objectives, demonstrating superior performance in maintaining service quality while optimizing operational costs.} 

%Several observations can be drawn from Figure~\ref{fig:bar_chart}. Firstly, there is a noticeable improvement, specifically a reduction in cost, with each feature added to the algorithm. The most cost-effective operational strategy is achieved with the MPS-MPC algorithm when $N=32$. This algorithm facilitates the flexible operation of storage units, which is further studied in Section~\ref{ssec:bens_storage}, and supports the integration of prosumers or multiple producers, as elaborated in Section~\ref{ssec:bens_multipro}. Overall, the MPS-MPC algorithm achieves a 
%\cm{8.9\%} cost reduction compared to the rule-based controller.

%The advantages of the SPS-MPC and MPS-MPC algorithms, particularly when incorporating storage, are most visible with a longer prediction horizon, namely of {at least eight hours}, showing substantial performance improvements relative to a shorter horizon, i.e., four hours here. Conversely, for controllers without storage, extending the prediction horizon appears to have minimal impact on performance. Additionally, the average network temperature for the rule-based controller is 61\,${}^\circ$C, whereas the temperatures for the MPC-based algorithms are consistently around 52 $\pm$ 2\,${}^\circ$C. This reinforces the importance of minimizing temperatures to reduce heat losses.

%While the average temperature throughout the network of the rule-based controller is 61 $^\circ$C, the network temperatures of the MPC-based algorithms lie within 52$\pm$2 $^\circ$C. Hence, the importance of minimizing temperatures and thus heat losses is highlighted. 

\subsection{Added Value of Storage} \label{ssec:bens_storage}
The aim of this section is to demonstrate the benefits of storage in the DHN. As can be seen in Figure~\ref{fig:bar_chart}, the integration of storage yields significant economic benefits. In the top part of Figure~\ref{fig:benefits_of_storage}, the production schedules of the SP-MPC and SPS-MPC are plotted against each other. Secondly, the lower portion of the figure illustrates the corresponding storage charging and discharging periods. It is worth noting that, in this case, we have additionally implemented a constraint that ensures that the total charging volume approximately matches the total discharging volume over the course of the day. Further specific details regarding this constraint are provided in the subsequent discussion.

\begin{figure}[thbp]
    \centering
    \includegraphics[width = 0.8\linewidth, trim={0cm 0.6cm 0cm 0cm}, clip]{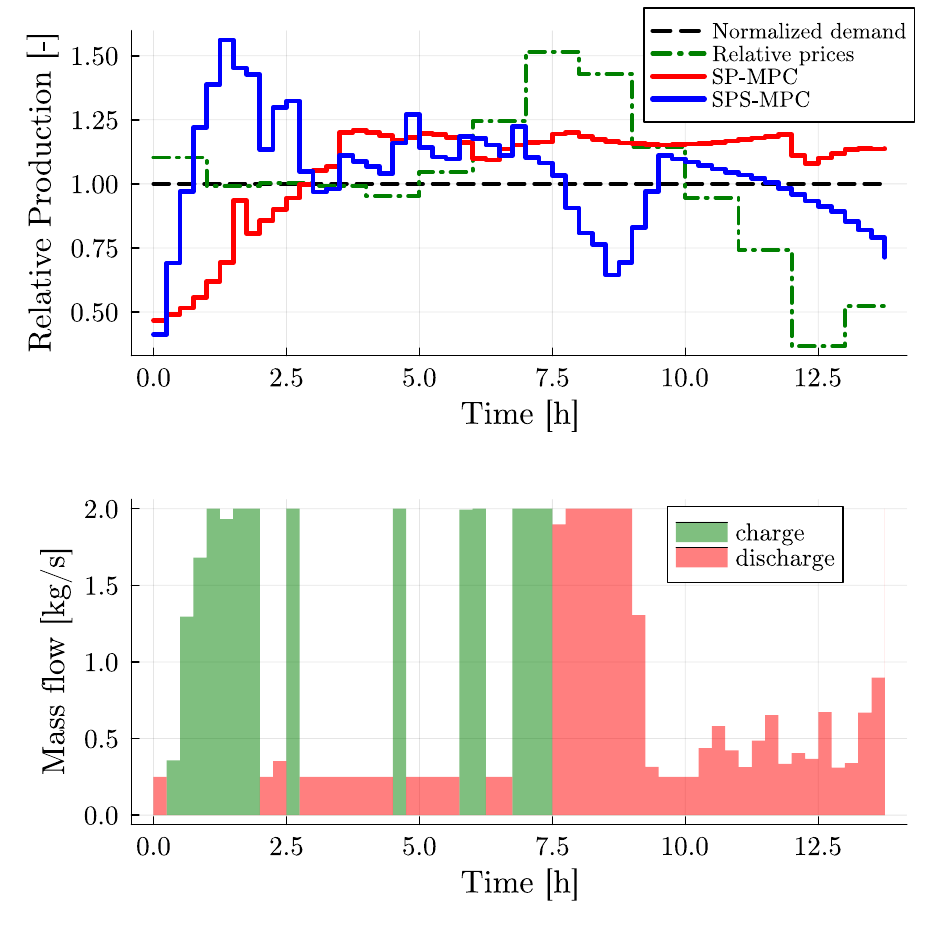}
    \caption{The top figure shows the power injected into the network relative to the demand being extracted for both the SP-MPC and the SPS-MPC methods. The green line denotes the normalized cost of generating heat and is shown for reference. The bottom figure shows the charging and discharging modes of the storage buffer in the SPS-MPC. For comparison the net mass flow entering or leaving the storage is equal to zero for the duration of the simulation.}
    \label{fig:benefits_of_storage}
\end{figure}
\subsubsection*{Discussion}
The results indicate that the controller leverages storage to enable flexible operational strategies. This flexibility is evidently seen in Figure~\ref{fig:benefits_of_storage}, where the comparison between the SPS-MPC (blue) and SP-MPC (red) in the top plot is presented. The SPS-MPC shifts part of the producer load by charging the storage at the beginning of the day. After approximately 7.5 hours, the controller discharges the storage, thereby reducing the load on the producer. Consequently, this allows the operator to achieve cost savings by minimizing production during periods of high prices.
Several limitations should also be acknowledged. This implementation requires additional constraints to ensure the total charging volume is equal to the discharging volume. Ideally, the storage term in the cost function, introduced in Section~\ref{ssec:objective_fun}, would suffice to incentivize the storage to maintain a sufficient level of hot water in the top layer of the storage. However, the effectiveness of this term is significantly influenced by the volume of the storage layers and the value of the objective weight. A weight that is too small results in constant discharging, whereas only an excessively large weight could occasionally induce charging behavior. Hence, in this study, we have opted to manually constrain the charging rates to emulate a more equal scenario.

\subsection{Added Value of Multiple Producers} \label{ssec:bens_multipro}

%Besides the advantage of distributing the production of heat, which enables facilities such as waste incineration plants or data centers to contribute to the district heating network, multiple producers can also help to reduce the differential pressure load at the central production plant by distributing the supply streams. Specifically, pressure drop per meter increases drastically with flow velocity (see, e.g., \cite[p. 442]{Werner2013}). This means that each pipeline has a limited flow capacity. By distributing the heat supply flows among different producers, we can achieve higher mass flows at consumer stations, and therefore, provide better operational margins. The aim of this example is to show this feature.

In addition to the benefit of distributing heat production, which allows facilities such as waste incineration plants or data centers to contribute to the district heating network, multiple producers can also reduce the pressure load on the central production plant by dividing the supply streams. The pressure drop per meter increases significantly with flow velocity (see, e.g., \cite[p. 442]{Werner2013}), so that each pipeline has a limited flow capacity. By allocating the heat supply flows among different producers, higher mass flows at consumer stations can be achieved, thus providing better operational margins. 
%The purpose of this example is to illustrate this feature.

For this example, the setup has been slightly modified. Firstly, the prosumer C1/P2 now functions as a full producer, consistently generating a fixed amount of heat, i.e., 100 kW, for the DHN at all times. This scenario can be compared to a scenario where a waste incineration plant or a data center, typically located away from central power plants, is connected to the network and produces heat continuously throughout the day. Secondly, the available pumping power on each edge has been scaled down to emphasize the effects of the hydraulic constraints. Thirdly, the temperature bounds are tightened such that $70 \leq T \leq 90$ for all states in the system. Whenever these bounds are exceeded, it indicates that the optimization problem could not find a feasible solution within this range and had to surpass the maximum bound.

In Figure~\ref{fig:BD0_tempandflow}, the consumer temperatures and mass flows for the single-producer case are shown. On the other hand, Figure~\ref{fig:BD1_tempandflow} shows the temperatures and mass flows at consumers when C1/P2 contributes 100 kW constantly and P1 is freely controllable, corresponding to the multi-producer case.
\begin{figure}[thbp]
    \centering
    \includegraphics[width = 1\linewidth]{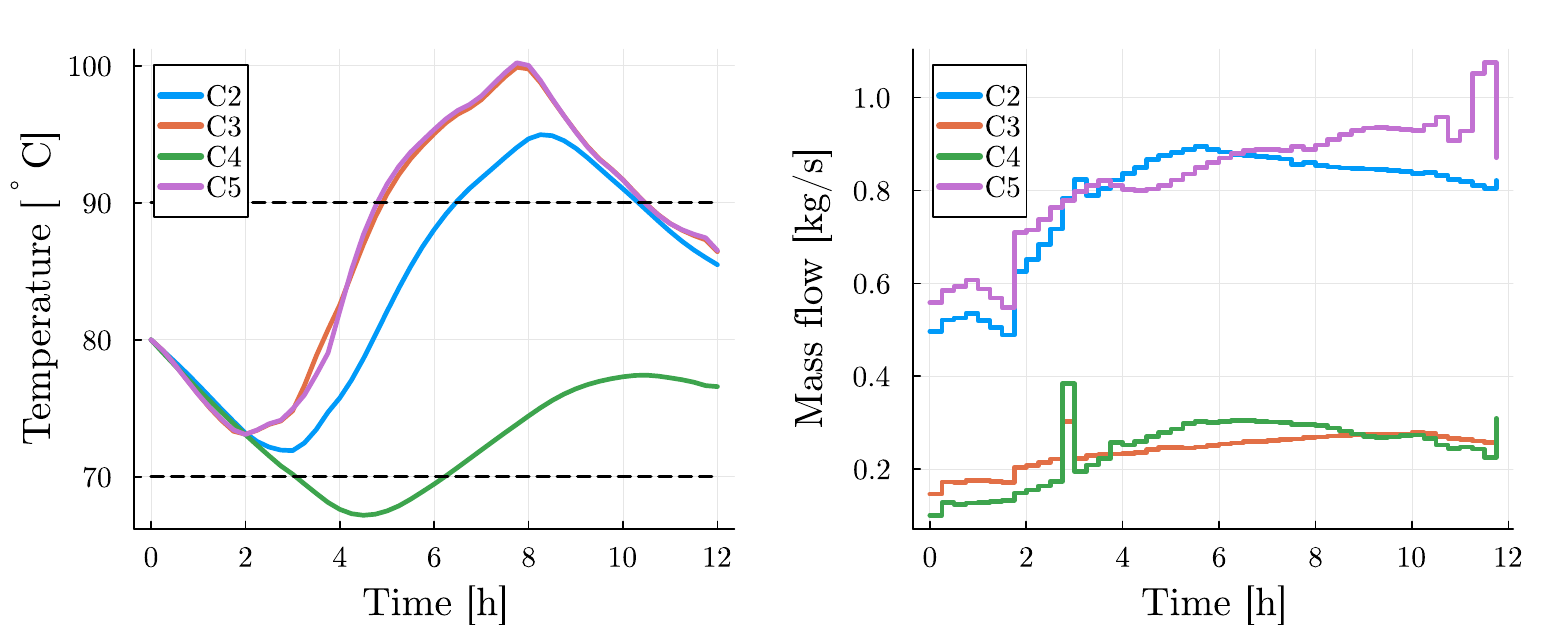}
    \caption{Consumer inlet temperatures and mass flows for a single-producer (SP-MPC) case.}
    \label{fig:BD0_tempandflow}
\end{figure}
\begin{figure}[thbp]
    \centering
    \includegraphics[width = 1\linewidth]{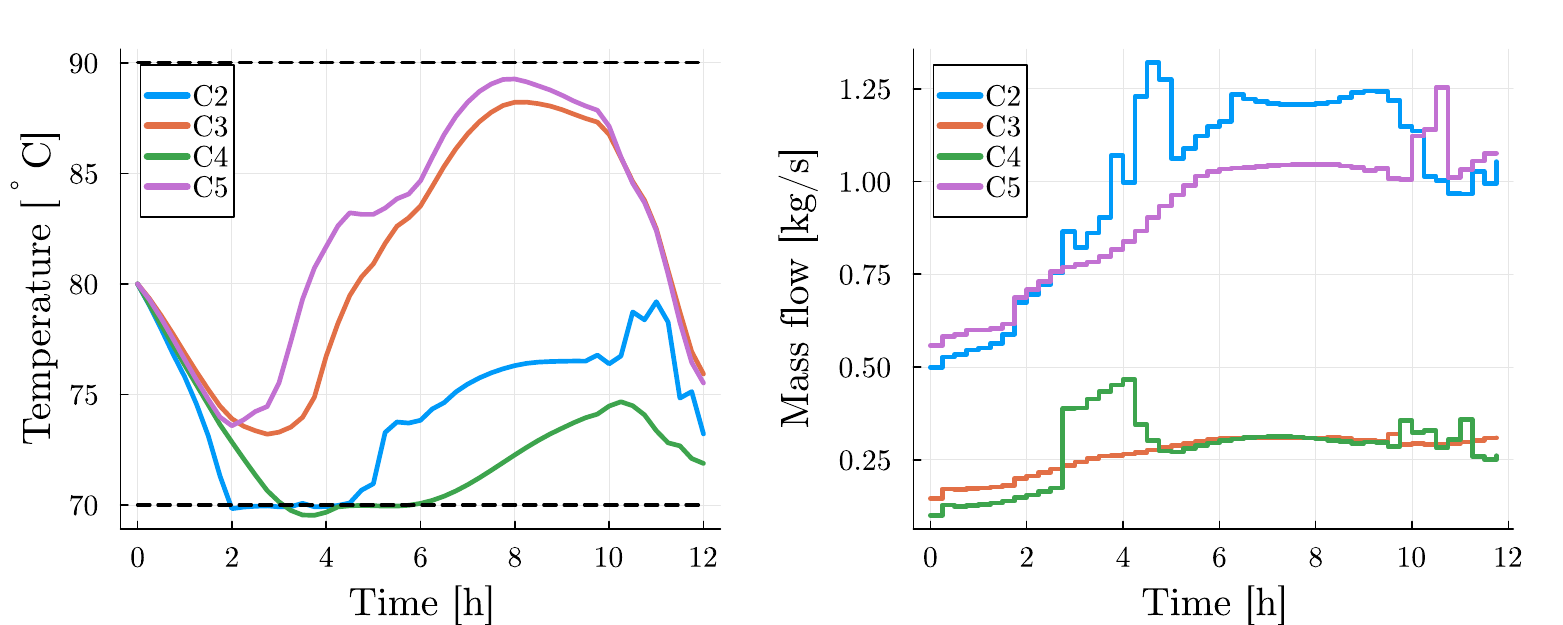}
    \caption{Consumer inlet temperatures and mass flows for a multi-producer (MP-MPC) case.}
    \label{fig:BD1_tempandflow}
\end{figure}

\subsubsection*{Discussion}
The results show that in the multi-producer scenario the network is able to deliver more heat to large consumers C2 and C5. In particular, the peak flow rate of C2 in Figure~\ref{fig:BD1_tempandflow} lies about 40\% higher than in Figure~\ref{fig:BD0_tempandflow}. The increase can be attributed to the fact that C2 receives heat from both P1 and C1/P2 sources. Consequently, P1's contribution to C2 is reduced, enabling it to allocate a greater share of its flow to C5, the largest consumer. Overall, the aggregate energy requirement in the multi-producer scenario, comprising the combined outputs of P1 and C1/P2, is roughly equivalent to the total production of P1 in the single-producer scenario. Therefore, the MP-MPC shows improved performance compared to the SP-MPC in staying within operational limits and fairly distributing heat.

\subsection{Computational Study}

Finally, we examine the computational performance and scalability of our algorithms for various model resolutions and prediction horizons. In Figure~\ref{fig:compstudy3}, the median computational times of the solver iterations for different model resolutions (left) and prediction horizons (right) are presented.

\begin{figure}[thbp]
    \centering
    \includegraphics[width = 1\linewidth]{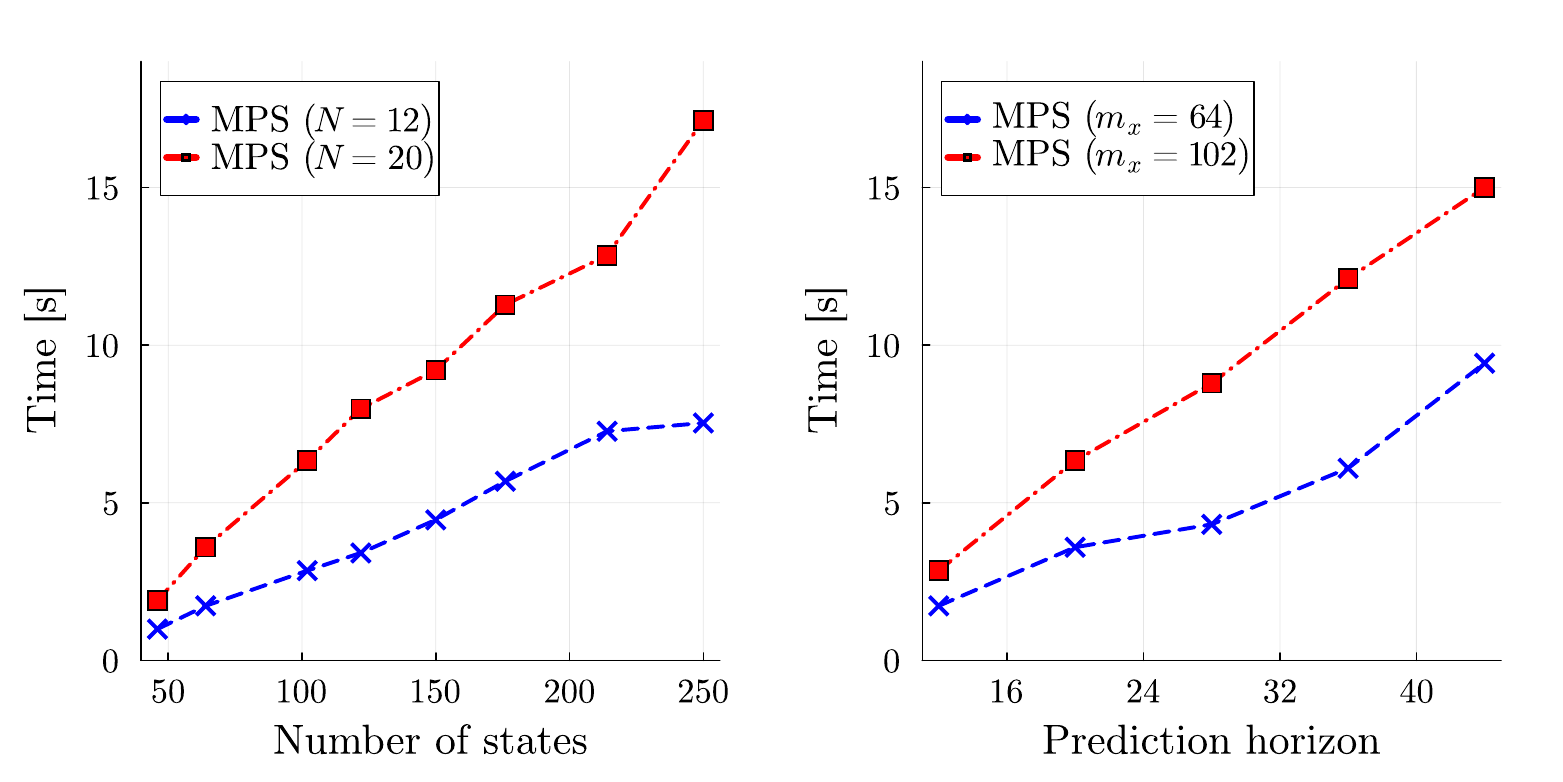}
    \caption{The left figure shows \textit{median solver} times for the MPS-MPC algorithm using a prediction horizon $N=12$ and $N=20$ for different number of state variables. The right figure shows for two models, one with 64 states and one with 102 states, the \textit{median solver} time based on the chosen prediction horizon.}
    \label{fig:compstudy3}
\end{figure}
\subsubsection*{Discussion}

In general, the computational cost associated with increasing model complexity remains manageable from an operational perspective, with even the most complex models requiring, on average, less than a minute per iteration. However, our observations indicate that the consistency of the solver in finding solutions within acceptable times diminishes under certain conditions, particularly for more complex models. These include models with over 200 states or with a prediction horizon greater than or equal to 40, where solution times can vary significantly. 
%Additionally, the inclusion of specific constraints, possibly linear, can occasionally lead to notable increases in computational time.

Assuming the existence of a feasible solution, a pragmatic approach to deal with varying solver times involves constraining the number of solver iterations or the solver time to ensure that controls are computed within the required timeframe, even if this may result in a decrease in solution quality. Nonetheless, increasing model complexity does not necessarily enhance performance. In certain scenarios, reducing the model resolution can actually decrease model mismatch. This is due to the fact that truncation error is influenced by the CFL condition, and this error is minimized when the CFL value is close to one \cite{Simonsson2024}. Consequently, the determination of model resolution is not straightforward and should be carefully considered on a case-by-case basis.

\subsection{Limitations and Future Work}

\cm{While our results demonstrate the effectiveness of MPC approaches for DHNs, several limitations should be acknowledged. The performance of the proposed methods relies on accurate demand predictions, with prediction errors affecting controller performance. Our implementation assumes both perfect knowledge of system parameters and access to full state measurements throughout the network. In practice, these assumptions may not hold, as many temperature and flow measurements might be unavailable or inaccurate.
From a computational perspective, solution times increase with model complexity and prediction horizon length, which may impact scalability for larger networks. Additionally, the non-convex nature of the optimization problem, particularly from bidirectional flow constraints, can cause the solver to struggle in finding optimal trajectories. This sometimes requires warm-starting strategies to guide the optimization toward desired flow configurations, an approach that introduces additional implementation complexity.

Future work could address these aspects through robust MPC formulations to handle demand uncertainty, state estimation techniques for networks with limited measurements, and development of computationally efficient optimization strategies.}

\section{Conclusion}
%We proposed an economic model predictive control algorithm for the control of district heating networks with multiple distributed heat sources, prosumers, and storage. The algorithm features a novel treatment the hydraulic constraints by a convexification approach. Additionally, our method allows the adaptation of the model resolution based on a desired model accuracy. Furthermore, we examine each of the proposed features in a simulation case study, demonstrating the benefits of storage and multiple producers, and showing the efficiency of the algorithm. 
We developed an economic model predictive control algorithm designed for the operational management of district heating networks incorporating essential elements of 4th generation district heating networks such as multiple distributed heat sources, prosumers, and storage. A key aspect of our algorithm is its innovative treatment of hydraulic constraints through a convexification approach. Additionally, our method allows for the adjustment of model resolution to achieve the desired level of model accuracy. We have conducted comprehensive numerical experiments to evaluate the proposed features, \cm{demonstrating that MPC approaches significantly outperformed conventional rule-based controllers, yielding up to 9\% cost reduction alongside reduced constraint violations, with computation times remaining within practical limits. Furthermore, the integration of storage capabilities and multiple-producer configurations enhanced these performance metrics.}

% \section*{Acknowledgments}
% This work has been funded by the Local Inclusive Future Energy (LIFE) City Project (MOOI32019), funded by the Ministry of Economic Affairs and Climate and by the Ministry of the Interior and Kingdom Relations of the Netherlands.

\vfill

\end{document}